\def\dirac{\mathrm{\boldsymbol{\delta}}}
\def\R{\mathbb{R}}
\def\Z{\mathbb{Z}}
\theoremstyle{plain}
\newtheorem{theorem}{Theorem}[section]
\newtheorem{corollary}[theorem]{Corollary}
\newtheorem{lemma}[theorem]{Lemma}
\newtheorem{proposition}[theorem]{Proposition}
\newtheorem{xmpl}[theorem]{Exapmle}
\theoremstyle{definition}
\theoremstyle{remark}
\newtheorem{remark}{Remark}
\newcolumntype{C}[1]{>{\centering\let\newline\\\arraybackslash\hspace{0pt}}m{#1}}
\begin{document}

\title{Estimation of Zero Intelligence Models by L1 Data}

\author{MARTIN \v SM\' ID\thanks{Department of Econometrics, Institute of Information Theory and Automation of the ASCR, Pod Vod\' arenskou v\v e\v z\' \i 4, Praha 8, CZ18208, Czech Republic, tel:+420 26605 2411} \thanks{Corresponding author: email:smid@utia.cas.cz}
\thanks{This work was supported by the Czech Science Foundation under Grant GAP402/12/G097.}
}

\maketitle

\begin{abstract} A unit volume zero intelligence (ZI) model is defined and the distribution of its L1 process is recursively described. Further, a generalized ZI (GZI) model allowing non-unit market orders, shifts of quotes and general in-spread events is proposed and a formula for the conditional distribution of its quotes is given, together with a formula for price impact. For both the models, MLE estimators are formulated and shown to be consistent and asymptotically normal. Consequently, the estimators are applied to data of six US stocks from nine electronic markets. It is found that more complex variants of the models, despite being significant, do not give considerably better predictions than their simple versions with constant intensities.

\end{abstract}

\noindent Keywords: Limit Order Market, Stochastic Models, Econometric Methods, 

\newpage

\section{Introduction}
With the recent wide expansion of trading according to continuous double auction (CDA), the importance of mathematical modelling of this trading mechanism grows. 

A number of models of CDA exist which assume a rational behaviour of agents involved (see, e.g., \cite{Parlour08} and the references therein); these models are, however, dependent on many arbitrary assumptions and do not give much better empirical results than models assuming a purely random behaviour of agents (see \cite{Gode93} for a discussion). Thus, as an alternative to the ``rational'' approach, zero intelligence (ZI) models of the CDA started to be studied: out of a number of similar models of that kind, let us name \cite{Stigler64, Maslov00, Challet01, Luckock03, Smith03, Mike08} or \citet{Cont10}. All these models assume unit order sizes, Poisson arrivals of market and limit orders and locally constant cancellation rates depending on the distance to the quotes.\footnote{The only exceptions are \cite{Maslov00} and \cite{Challet01} with a discrete time (which could, however, be regarded as Poisson events' time) and \cite{Mike08}, in which the cancellation rate also depends on the order books' size and imbalance} 
For a survey of the ZI models and their characteristics, see \cite{chakraborti2011econophysicsi} and \cite{chakraborti2011econophysicsii}, especially Section II of the latter paper. For the recent developments, see \cite{abergel2011econophysics}.

Although, according to their advocates, the ZI models are able to mimic many stylised empirical facts such as fat tails or non-Gaussianity (\cite{Slanina01,Slanina08,Smid08d, Cont11}), no rigorous statistical evidence in that respect has been presented yet due to intractability of these models. In \cite{Smid12}, a conditional distribution given L1 data is described; however, the model considered in this paper is too general to ensure consistency of statistical estimators.

The present paper simplifies the approach of \cite{Smid12} by assuming only a finite number of prices so that it is possible to construct consistent asymptotically normal estimators of the models' parameters.

The exposition is started by introducing a sufficiently general setting covering a wide range of existing zero intelligence (ZI) models, e.g., an early work of \cite{Stigler64}, the model by \cite{Cont10}, a discretised version of \cite{Luckock03} and a slightly modified version of \cite{Smith03}. After demonstrating ergodicity of the covered models, a conditional density of jumps of the L1 process (i.e., the bid, the ask, and the corresponding offered volumes) given its history is formulated.\footnote{Although order book (L2) data  might seem more natural to be used for the estimation, the better availability and lower price of L1 data speak for using them. This approach, in addition, allows us to avoid a problem of hidden limit orders, which are invisible in the books but affect sizes of quotes' jumps.}

Further, in order to treat the most obvious discrepancies between the ZI models and reality, a generalised (GZI) setting is proposed, which allows shifts of the quotes, non-unit market orders and general distributions of inside-the-spread events. A formula for a conditional density of the out-of-spread jumps of the quotes, later used in estimating the in-book parameters, is given.

Subsequently, Maximum Likelihood estimators are formulated and several variants of both the ZI and GZI models are estimated by means of L1 data for six stocks on nine US electronic markets. Further, for each stock-market pair, the variants of the models are compared by their ability to forecast magnitudes of the quotes' jumps. It is found that more complicated variants of the models do not bring significant improvements in comparison with the simple variant with constant intensities, which itself, however, nearly always performs significantly better than a naive prediction of the jumps. It is also found that GZI variants of the models do not perform significantly better than their ZI counterparts.

This paper is organised as follows: First, the ZI setting is defined (Section \ref{sec:zi}) and the distribution of the L1 process given its history is derived. Consequently, the GZI setting is introduced and a formula for the density of the out-of-spread jumps and volume changes is given (Section \ref{sec:gzi}). Finally, the empirical evidence is discussed (Section \ref{sec:empirical}) and the paper is concluded (Section \ref{sec:conclusion}). The Appendix includes a proof of Theorem \ref{th:dist} (Appendix \ref{sec:proofs}), a proof of asymptotic properties of the MLE estimator (Appendix \ref{sec:mle}), and detailed results on the estimation (Appendix \ref{sec:detail}).

\section{Zero Intelligence Model}
\label{sec:zi}
\global\long\def\P{\mathbb{P}}
\global\long\def\E{\mathbb{E}}
\global\long\def\N{\mathbb{N}}
\global\long\def\indep#1{{\perp\hspace{-2mm}\perp}#1}
\global\long\def\Po{\mathrm{Po}}
\global\long\def\Bi{\mathrm{Bi}}

\label{existtheory}

\subsection{Definition}
\label{ss:zi}
Consider a general discrete-price continuous-time zero intelligence model with unit order sizes described by a pure jump type process
$$
\Xi_t=(A_t,B_t),\qquad t\geq0,
$$
where
 \[
A_{t}=(A^1_{t}, A^2_{t},\dots,A^n_{t}) \in \N^n_0,
\]
and  
\[
B_{t}=(B^1_{t}, B^2_{t},\dots,B^n_{t}) \in \N^n_0,
\]
are the sell limit order book and buy limit order book, respectively; here, $n$ is a number of possible prices (ticks) and, for any $p$,
$A^p_{t}$ ($B^p_{t}$) stands for the number of the sell (buy) limit
orders with price $p$ waiting at time $t$. Further, denote by
\[
a_{t}:=\inf\left\{ p:\, A_{t}(p)>0\right\} \vee 0,\qquad b_{t}:=\sup\left\{ p:\, B_{t}(p)>0\right\}\wedge n+1, \]
the (best) ask and bid respectively. 
The list of possible events causing jumps of $\Xi$ together with the notation for their intensities is given in the following table:
\bigskip
\def\BMO{\mathrm{BMO}}
\def\SLO{\mathrm{SLO}}
\def\CA{\mathrm{CA}}
\def\SMO{\mathrm{SMO}}
\def\BLO{\mathrm{BLO}}
\def\CB{\mathrm{CB}}
\begin{center}
\begin{tabular}{c|c|l}
code - $e$ &  intensity - $I_t(e)$ & description \\
\hline
$\BMO$ &$\theta_t=\theta(a_t,b_t)$ & 
\begin{minipage}{8cm}
An arrival of a buy market order, causing $A^{a_t}$ to decrease by one (if the sell limit order book is empty then the arrival of the market order has no effect).
\end{minipage}
 \\
 \hline
$\SLO(p)$ &$\kappa_{t,p}=\kappa(a_t,b_t,p)$& 
\begin{minipage}{8cm}
An arrival of a sell limit order with limit price $p > b_t$ causing an increase of $A^p$ by one.
\end{minipage} \\
\hline
$\CA(p)$ & $\rho_{t,p}=A^p_t\rho(a_t,b_t,p)$ & 
\begin{minipage}{8cm}
A cancellation of a pending sell limit order with a limit price $p$ causing a decrease of $A^p$ by one. 
\end{minipage}
\\
\hline
$\SMO$ & $\vartheta_t=\vartheta(a_t,b_t)$ & 
\begin{minipage}{8cm}
An arrival of a sell market order, causing $B^{b_t}$ to decrease by one (if the buy limit order book is empty then the arrival of a market order has no effect).
\end{minipage}
 \\
\hline
$\BLO(p)$ & $\lambda_{t,p}=\lambda(a_t,b_t,p)$& 
\begin{minipage}{8cm}
An arrival of a buy limit order with limit price $p < a_t$ causing an increase of $B^p$ by one.
\end{minipage} \\
\hline
$\CB(p)$ & $\sigma_{t,p}=B^p_t\sigma(a_t,b_t,p)$ & 
\begin{minipage}{8cm}
 A cancellation of a pending buy limit order with a limit price $p$ causing a decrease of $B^p$ by one. 
\end{minipage}
\end{tabular}
\end{center}

\noindent Here, all $\theta,\kappa,\rho,\vartheta,\lambda,\sigma$ are measurable functions. 

It is assumed that 
all the flows of the market orders, and the flows of limit orders as well as their cancellations are mutually independent in the sense that
 the  conditional distribution of the relative time of the first event following $t\in \R^+_0$ given the history of $\Xi$ up to $t$ is exponential with parameter 
\begin{equation}\label{eq:d1}
\Lambda_t=\theta_t+\vartheta_t+ \sum_{p=b_t+1}^n\kappa_{p,t}+
\sum_{p=0}^{a_{t}-1}\lambda_{p,t} +\sum_{p=a_{t}}^nA_t^p\rho_{p,t} +\sum_{p=0}^{b_{t}}B_t^p \sigma_{p,t}
\end{equation} 
and that
\begin{equation}\label{eq:d2}
\text{the probability that the next event will be of type $e$ equals to $I_t(e) / \Lambda_t$}
\end{equation}
where $I_t(e)$ is the intensity of event $e$ at $t$. It is obvious that $\Xi$ is then a Markov chain in a countable state space. 

Finally, it is assumed that $a_0,b_0$ are deterministic and
\begin{multline}
\label{eq:zero}
\text{$A^{a_0+1}_0,A^{a_0+2}_0,\dots$ are Poisson with parameters $\iota_{a_0+1},\iota_{a_0+2}, \dots$,}\\ 
\text{mutually independent and independent of $B_0$}
\end{multline}
where $\iota_\bullet \geq 0$ are constants, and that a symmetric  assumption holds for $B$.

\subsection{Relation to Existing Models}

The following table shows how some of the models mentioned in the Introduction comply with our setting. When speaking about \cite{Luckock03}, we have its discretised version (see \cite{Smid12}, Sec 3.3) in mind. When speaking about \cite{Smith03}, we are considering its bounded version (i.e., contrary to \cite{Smith03} we assume zero arrival intensities for prices smaller than one and greater than $n$
). 
\begin{center}
\begin{small}
\noindent
\begin{tabular}{l|c|c|c|c|c|c}
model & $\theta$ & $\vartheta$ & $\kappa$ & $\rho$ & $\lambda$ & $\sigma$
\\
\hline
\cite{Luckock03} & $K(b_t)$ & $1-L(a_t-1)$ & 
$\kappa^l_p$ & $0$ & $\lambda^l_p$ & $0$
\\
\hline
\cite{Smith03} & $\theta^s$ & $\theta^s$ & 
$\kappa^s$ & $\rho^s$ & $\kappa^s$ & $\rho^s$
\\
\hline
\cite{Cont10} & $\theta^c$ & $\theta^c$ 
& 
$\kappa^c(p-b_t)$ & $\rho^c(p-b_t)$
& 
$\kappa^c(a_t - p)$ & $\rho^c(a_t - p)$
\\
\hline
\end{tabular}
\end{small}
\end{center}
\noindent
Here, $\kappa^l_p=K(p)-K(p-1)$, $\lambda^l_p=L(p)-L(p-1)$ where
$K$, $L$ are (continuous) cumulative distribution functions, $\kappa^c,\rho^c$ are measurable functions and the rest of the  symbols are constants.

Some of the models from the Introduction were not mentioned in the table: We did not include either \cite{Maslov00} or \cite{Challet01} because they both consider discrete time and are very similar to \cite{Smith03} with $\rho=\sigma=0$, \cite{Cont10}, respectively. The model by \cite{Mike08} was not included because of its complicated cancellation sub-model and because, apart from the cancellations, it is similar to that of \cite{Cont10}. Finally, we did not include \cite{Stigler64} because it is a special version of \cite{Luckock03} (with $K(x)=L(x)=x$).

\subsection{Distribution}

Let us start with a result which will guarantee that sampling from the model will give enough information to a statistician.

\begin{proposition}\label{ergprop} If  $\rho(\bullet)> 0$, $\sigma(\bullet)>0$, $\theta(\bullet) > 0$ and $\vartheta(\bullet)>0$. then 
$\Xi_t$ is ergodic. 
\end{proposition}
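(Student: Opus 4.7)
The plan is to establish ergodicity by verifying a Foster--Lyapunov drift criterion for continuous-time Markov chains on a countable state space, in the spirit of Meyn and Tweedie. Since $(a_t,b_t)$ ranges over the finite set $\{0,1,\dots,n+1\}^2$ and $p$ over the finite tick grid $\{0,1,\dots,n\}$, the intensity functions $\theta,\vartheta,\kappa,\lambda$ are automatically uniformly bounded by some $M<\infty$, while the hypothesis $\rho,\sigma>0$ on the same finite domain yields a strictly positive lower bound $c_0:=\min\bigl(\inf\rho,\inf\sigma\bigr)>0$. As a Lyapunov function I would take the total book volume $V(\xi):=\sum_{p=0}^n (A^p+B^p)$, whose sublevel sets $\{V\le R\}$ are finite.

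Applying the generator $G$ of $\Xi$ to $V$ is then straightforward: limit-order arrivals push $V$ up at total rate at most $2(n+1)M$, market orders contribute non-positively, and cancellations contribute exactly $-\sum_p\bigl(A^p\rho_{t,p}+B^p\sigma_{t,p}\bigr)\le -c_0 V(\xi)$. Combining these and discarding the market-order term yields the geometric drift estimate $GV(\xi)\le K-c_0 V(\xi)$ with $K:=2(n+1)M$. Standard results for CTMCs on countable state spaces (e.g.\ Meyn--Tweedie, Chapter~4) then give non-explosion, positive recurrence on the class the chain enters, and a unique stationary distribution $\pi$ on that class with $P_t(\xi,\cdot)\to\pi$ in total variation.

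To close the argument I would verify that $\Xi$ in fact visits a single closed communicating class. Strict positivity of $\theta,\vartheta,\rho,\sigma$ lets any state drain its book in finitely many steps via a sequence of market orders and cancellations, reaching the empty configuration with positive probability in bounded time; hence every state communicates with the empty book and the recurrent class selected by the Lyapunov argument is unique, which together with the previous step gives ergodicity of $\Xi$.

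The real work is the drift estimate, and it is light here because the finiteness of the price set renders all arrival rates bounded while the strict positivity of $\rho$ and $\sigma$ manufactures the crucial $-c_0 V(\xi)$ term that dominates them; this conversion of book size into negative drift is the content of the hypotheses. The only bookkeeping subtlety worth a remark is the degenerate situation in which $\kappa\equiv\lambda\equiv 0$ at the empty book, where that state becomes absorbing and the stationary distribution reduces to the point mass $\delta_{(0,0)}$; this still satisfies the definition of ergodicity, so nothing is lost.
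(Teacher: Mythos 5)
Your proof is correct, and it rests on exactly the same structural observation as the paper's: on a finite tick grid the arrival intensities are uniformly bounded, while the cancellation intensities, being proportional to the number of pending orders with a strictly positive proportionality constant, grow linearly in the total book volume. The paper packages this observation differently, however: it points to Proposition 2 of \cite{Cont10} and exhibits a one-dimensional birth--death chain on $\N_0$ with constant birth rate $\lambda=\sum_{p}[\max\kappa+\max\lambda]$ and death rate $\mu_i$ growing linearly in $i$, which stochastically dominates the total number of orders and has a recurrent zero state; ergodicity of $\Xi$ is then inherited from this domination. You instead run the Foster--Lyapunov drift criterion directly on $V=\sum_{p}(A^p+B^p)$, obtaining $GV\le K-c_0V$, and then separately identify the unique closed communicating class by draining the book to the empty configuration. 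The two routes are essentially interchangeable here --- the dominating birth--death chain is recurrent precisely because it satisfies the same drift inequality --- but yours is self-contained and yields the stronger conclusion of geometric ergodicity, whereas the paper's is shorter because it outsources the recurrence argument to \cite{Cont10}. A minor observation applying to both arguments: reaching the empty book needs only $\rho,\sigma>0$ (every pending order is cancelled at a positive rate), so the hypotheses $\theta,\vartheta>0$ enter only as additional negative drift rather than as an essential ingredient of the irreducibility step.
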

\begin{proof} The Proposition may be proved analogously to \cite{Cont10}, Proposition 2, where the ergodicity is verified by finding a Markov chain in $\N_0$ dominating the total number of orders with a recurrent zero state. In particular, it suffices to replace the definition of $\lambda$ and $\mu_i$ from \cite{Cont10} by 
$$\lambda= \sum_{p=1}^n [\max_{a,b,p} \kappa(a,b,p)+
\max_{a,b,p} \lambda(a,b,p)],
$$
and
$$
\mu_i = \min_{a,b} \theta(a,b)+\min_{a,b} \vartheta(a,b) 
+ i\min_{a,b,p}[\min(\rho(a,b,p),\sigma(a,b,p)].
$$
\end{proof}

\noindent Our next goal is to derive a recursive analytic expression of the distribution of 
$$
x_\tau = (a_\tau,b_\tau,q_\tau,r_\tau)_{\tau \geq 0}, 
\qquad 
q_\tau = A^{a_\tau}_\tau,
\quad
r_\tau = B^{b_\tau}_\tau,
\quad
\tau \geq 0
$$
(the L1 process). To do so, let us denote by $t_1,t_2,\dots$ the jumps of $x$ and, in order to avoid frequent double indexing, let us write $y_{[i]}$ instead of $y_{t_i}$ and $y_{[i-]}$ instead of $y_{t_{i}-}$ for any process $y$.
For each $i$, we want to evaluate
$$
\P[(\Delta t_i, x_{[i]}) \in \bullet|\xi_{[i-1]}],\qquad
\xi_{[k]}=(x_{[k]},\Delta t_k,x_{[k-1]},\dots,x_0),
\quad k\geq 0,
$$
starting with 
$$
\P[(\Delta t_i, a_{[i]}, q_{[i]}) \in \bullet|\xi_{[i-1]}].
$$
To this end, note that 
\begin{itemize}
\item $a$ jumps down if and only if a limit order arrives into the spread, in which case a limit price of the new order becomes a new value of $a$.
\item $a$ jumps up if and only if the offered volume of the ask decreases to zero due to either a market order arrival or a cancellation, in which case the closest occupied tick becomes a new value of $a$.
\end{itemize}
Formally,
\def\DA{\mathrm{DA}}
\begin{equation}\label{eq:aq}
(a_{[i]},q_{[i]}) = 
\begin{cases}
(b_{[i-1]} + 1,1)  &  e_i = \SLO(b_{[i-1]}+1) \\
(b_{[i-1]} + 2,1)  &  e_i = \SLO(b_{[i-1]}+2) \\
\dots & \\
(a_{[i-1]} - 1,1)  &  e_i =  \SLO(a_{[i-1]}-1) \\
(a_{[i-1]},q_{[i-1]}+1)  &  e_i =  \SLO(a_{[i-1]}) \\
(a_{[i-1]},q_{[i-1]}-1) & e_i \in \{\BMO,\CA(a_{[i-1]}) \}, q_{[i-1]}>1 \\
(d_i,A^{d_i}_{[i-]}) & e_i \in \{\BMO,\CA(a_{[i-1]}) \}, q_{[i-1]}=1 \\
(a_{[i-1]},q_{[i-1]}) & \text{otherwise}
\end{cases}
\end{equation}
$$
d_i = \inf\{\pi: \pi > a_{[i-1]}, A^{\pi}_{[i-]}>0\}
\wedge n+1
$$
where $e_i$ is the type of an event happening at $t_i$ and
where $A^{n+1}\equiv 0$ by definition.
Thus, to determine the conditional distribution of $(\Delta t_i,a_{[i]},q_{[i]})$, it suffices to know a joint distribution of $(\Delta t_i, e_i,A_{[i-]})$ which is described by the following Theorem in three steps:

\begin{theorem}\label{th:dist} 
(i) For any $\tau$ and $e$,
$$
\P[\Delta t_i > \tau,e_i=e|\xi_{[i-1]}]
=\exp\{-\gamma_{i-1} \tau\} \pi_{i} (e),
$$
where
$$
\pi_{i}(e)=  \gamma^{-1}_{i-1} \times
\begin{cases}
\kappa_{p,[i-1]}, & e = \SLO(p), b_{[i-1]} < p \leq a_{[i-1]},\\
\theta_{[i-1]}, & e= \BMO, \\
\rho_{[i-1]}, & e= \CA(a_{[i-1]}), \\
\lambda_{p,[i-1]}, & e = \BLO(p), b_{[i-1]} \leq p < a_{[i-1]},\\
\vartheta_{[i-1]}, & e= \SMO, \\
\sigma_{[i-1]} & e= \CB(b_{[i-1]}),
\end{cases}
$$
and, for any $k$, 
$$
\gamma_k = 
\theta_{[k]}+\vartheta_{[k]}
+ \rho_{[k]}+\sigma_{[k]}
+
\sum_{p=b_{[k]}+1}^{a_{[k]}} \kappa_{p,[k]}
+\sum_{p=b_{[k]}}^{a_{[k]}-1} \lambda_{p,[k]}.
$$
(ii) Denote by $\dirac_c$ the  Dirac measure concentrated in $c$ (i.e., the distribution of constant $c$) and write $\circ$ for convolution (i.e., summation of two independent variables). For any $i$ and $p$,
$$
A^p_{[i-]}|\xi_{[i-]}\sim 
\begin{cases}
0 & \text{on $[1 \leq p < a_{[i-1]}]$}\\
\dirac_{q_{[i-1]}} & \text{on $[p = a_{[i-1]}]$} \\
\mathrm{Bi}
\left(
\nu_{p,i},\varpi_{p,i}
\right) \circ
\mathrm{Po}
\left(\epsilon_{p,i}+\iota_{p,i}\right) & \text{on $[a_{[i-1]} < p \leq n]$}
\end{cases}
$$
where 
$$
\xi_{[i-]}=(e_i,\Delta t_i, \xi_{[i-1]}),
\qquad
\nu_{p,i}=
\begin{cases}
A^p_{[k_i(p)-]}& k_i(p) > 0, \\
0 & k_i(p)=0,
\end{cases}
\qquad
\varpi_{p,i}
=
\begin{cases}
\prod_{j=k_i(p)}^{i-1}\delta_{p,j} & k_i(p) < i \\
1 & k_i(p)=i
\end{cases}
$$
$$
\epsilon_{p,i}=\sum_{j=k_i(p)}^{i-1}\phi_{p,j}(1-\delta_{p,j})\prod_{m=j+1}^{i-1}\delta_{p,m}, 
\qquad
\iota_{p,i}=\begin{cases}
0 & k_i(p) > 0, \\
\iota_p\prod_{j=0}^{i-1}\delta_{p,j} &  k_i(p) = 0,
\end{cases}
$$
$$
k_i(p) = \min \{k \geq 0,a_{[k]} < p \} \vee i
$$
and, for any $k$,
$$
\phi_{p,k}=\frac{\kappa_{p,[k]}}{\rho_{p,[k]}},
\qquad
\delta_{p,k} =\exp\{
-\rho_{p,[k]}\Delta t_{k+1}
\}
$$
(iii) 
$A^1_{[i-]},A_{[i-]}^2,\dots,A_{[i-]}^n$ are mutually conditionally independent given $\xi_{[i-]}$.
\end{theorem}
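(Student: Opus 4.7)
The plan is to treat the three parts in order, using Poisson thinning and superposition as the unifying tool. The structural observation is to split all events into \emph{L1-affecting} (market orders, limit orders with $b \le p \le a$, and cancellations at the best quotes) and \emph{L1-silent} (limit orders strictly above the ask or below the bid, cancellations off the best quotes). The intensities of the L1-affecting events depend only on the L1 variables $(a,b,q,r)$, which are constant between L1 events, so the combined L1-affecting rate equals $\gamma_{i-1}$ throughout $[t_{i-1}, t_i)$, while the L1-silent events run on their own independent clocks and only perturb the hidden part of the order book.

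For (i), I would apply the standard competing-exponentials argument to the L1-affecting clocks alone: given $\xi_{[i-1]}$, the waiting time $\Delta t_i$ is exponential with rate $\gamma_{i-1}$ and, independently, $e_i$ is drawn categorically with probabilities $I_{[i-1]}(e)/\gamma_{i-1}=\pi_i(e)$. The silent clocks do not disturb this by independence. For (ii), I would induct on $i$ via the classical M/M/$\infty$ transition formula: on an interval $[t_k,t_{k+1})$ with $a_{[k]}<p$, the hidden process $A^p$ evolves as an M/M/$\infty$ queue with arrival rate $\kappa_{p,[k]}$ and per-order cancellation rate $\rho_{p,[k]}$, so starting from $m$ orders its value at $t_{k+1}-$ has law $\Bi(m,\delta_{p,k})\circ\Po\!\left(\phi_{p,k}(1-\delta_{p,k})\right)$, the binomial counting survivors (each surviving independently with probability $\delta_{p,k}$) and the Poisson counting new arrivals that survive to $t_{k+1}-$ (by Poisson marking with survival kernel $\exp\{-\rho_{p,[k]}(t_{k+1}-s)\}$). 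Crucially, conditioning on the identity and time of the next L1 event does not alter this kernel, because the L1 and L1-silent clocks are independent. Iterating this transition from the reset time $k_i(p)$ and using that thinning $\Bi(n,u)$ by $v$ yields $\Bi(n,uv)$, thinning $\Po(\lambda)$ by $v$ yields $\Po(\lambda v)$, and independent Poissons superpose additively telescopes into the stated formulas for $\varpi_{p,i}$, $\epsilon_{p,i}$ and $\iota_{p,i}$; the base case supplies $\nu_{p,i}=A^p_{[k_i(p)-]}$, which is $\xi_{[i-]}$-measurable since it equals $q_{[k_i(p)-1]}$ or $0$ according as $a_{[k_i(p)-1]}$ equals or strictly exceeds $p$ (and when $k_i(p)=0$ the initial Poisson mass $\iota_p$ is absorbed into $\iota_{p,i}$ instead).

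For (iii), I would use that the model admits a realisation with mutually independent Poisson clocks indexed by event type and by price, so distinct hidden processes $A^p$ are driven by disjoint sources of randomness. Conditional on $\xi_{[i-]}$, the pinned-down information is aggregated through marginal clocks involved in the L1 events and never couples distinct price levels, whence the posterior law of $(A^p_{[i-]})_p$ factorises. The main technical obstacle I foresee is the bookkeeping in (ii): one must carefully treat the possibility that $a$ oscillates around $p$ between times $0$ and $i-1$, verifying that $k_i(p)$ indeed marks the correct starting index of the current M/M/$\infty$ regime and that the $\xi_{[i-]}$-observed value at that moment is correctly inserted as the initial condition of the Bin--Po recursion.
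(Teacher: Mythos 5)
Your proposal is correct and follows essentially the same route as the paper: part (i) by competing exponentials among the quote-level clocks, part (ii) by iterating the M/M/$\infty$ (immigration--death) transition kernel $\Bi(m,\delta)\circ\Po(\phi(1-\delta))$ from the reset index $k_i(p)$ and telescoping via thinning and Poisson superposition (the paper isolates this composition rule as Lemma \ref{lem:po}), and part (iii) from the per-price independence of the driving noise, with the key observation in all three parts being that the hidden levels are conditionally independent of the observed L1 events. The paper merely executes the same plan with heavier machinery -- a transfer-theorem realisation of the book as independent IDPs, the strong Markov property plus the tower rule to replace conditioning on the full state $\Xi_{[i-1]}$ by conditioning on $\xi_{[i-1]}$, and explicit inductions for (ii) and (iii) -- which is precisely the bookkeeping you flag at the end.
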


\begin{proof} See Appendix \ref{sec:proofs}.
\end{proof}

\noindent Before proceeding further, let us illustrate point (ii) of the Theorem, which is somewhat opaque, by an example.

\begin{xmpl} \label{xmpl:x} Consider model by \cite{Smith03} with $\iota_\bullet\equiv 0$ (implying $\iota_{\bullet,3}\equiv 0$). 
Let $i=3$, $a_0=2$, $q_0=1$,
$a_{[1]}=3$, $q_{[1]}=2$, $a_{[2]}=1$, $q_{[2]}=3$ (see Figure \ref{fig:ilust} for an illustration). 

Let us start with $p=2$: as no orders could be present in tick $2$ before $t_2$ (the ask was above 2 that time), all the in-book orders present in the tick $t_3$ are those having arrived between $t_2$ and $t_3$ and not being cancelled. According to Appendix \ref{sec:proofs}, number of those orders is $\Po(\epsilon_{i,3})$ (to check it, note that  $k_i(3)=2$ so $\nu_{i,2}=A^2_{[2-]}=0$ and $\epsilon_{i,3}=\frac{\kappa}{\rho}[1-\exp\{-\rho \Delta t_3\}]$ which, according to Appendix \ref{sec:proofs}, is actually the distribution of the immigration and death process of length $\Delta t_3$). 

For $p=3$, the situation is the same with the difference that number of initial orders having survived until $t_3$, which is distributed according to $\Bi(2,\exp\{-\rho\Delta t_3\})=\Bi(A^3_{[3-]},\delta_{i,3})$, has to be added to $A^p_{[i-]}$.

Further, as the orders present in tick 4 at $t_3$ are exactly those having arrived  from the start and not being cancelled, $A^4_{[i-]}|\xi_{[i-]}\sim \Po\left(\frac{\kappa}{\rho}[1-\exp\{-\rho  t_3\}\right)$ - it is easy to check that (ii) gives the same result.
 
Finally, because $k_i(1)=3$, we have  $\epsilon_{1,i}=0$, $\omega_{1,i}=1$ and $\nu_{1,i}=3$ so, according to (ii), $A^1_{[i-]}\sim \Bi(3,1)=\dirac_3$.
\end{xmpl}

\begin{figure}
\begin{center}
\resizebox*{10cm}{!}{\includegraphics{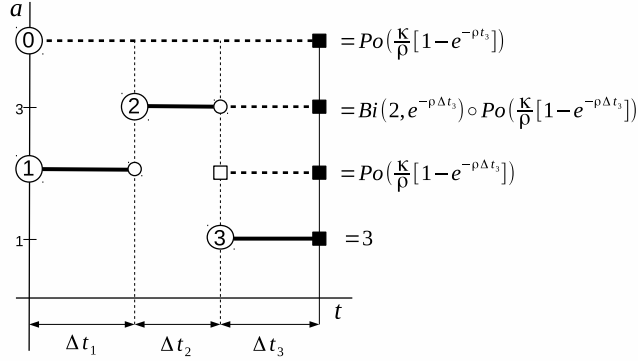}}
\end{center}
\caption{Illustration of Example \ref{xmpl:x} }
\label{fig:ilust}
\end{figure}

\noindent From the Theorem and (\ref{eq:aq}), we get that

\begin{corollary} \label{cor:aq} 
(i) Conditional density of $(e_i,\Delta t_i|\xi_{[i-1]})$ is given by $$
f_i(\tau,e|\xi_{[i-1]}) = \gamma_{i-1}^{-1}\exp\{-\gamma_{i-1} \tau \} \pi_i(e),
$$
(ii) conditional density of $(a_{[i]},q_{[i]})|\xi_{[i-]}$ is given by 
\begin{multline*}
g_i(a,q|\xi_{[i-]})\\=
\begin{cases}
\mathbf{1}[q= A^a_{[i-1]}+1] & 
\text{on $[b_{[i-1]} < a \leq a_{[i-1]},
e_i=\SLO(a)]$} \\
\mathbf{1}[q= q_{[i-1]}-1] & 
\text{on $[a = a_{[i-1]},
e_i\in\{\CA(a_{[i-1]}),\BMO\}, q_{[i-1]} > 1]$} \\
\omega_{a,q,i}
\prod\limits_{\pi=a_{[i-1]}+1}^{a-1} 
(1-\varpi_{\pi,i})^{\nu_{\pi,i}} e^{-\epsilon_{\pi,i}-\iota_{\pi,i}}
& 
\text{on $[a_{[i-1]}<a,e_i\in\{\CA(a_{[i-1]}),\BMO\}, q_{[i-1]} = 1]$} \\
\mathbf{1}[q=q_{[i-1]},a=a_{[i-1]}] & \text{otherwise},
\end{cases}
\end{multline*}
where $\mathbf{1}$ denotes indicator function and
$$
\omega_{p,q,i} = 
\begin{cases}
\P[\Bi(\nu_{p,i},\varpi_{p,i}) \circ \Po(\epsilon_{p,i}+\iota_{p,i}) = q] & 
a_{[i-1]} < p\leq n,\\
\mathbf{1}[q=0] & a=n+1.
\end{cases}
$$
\\
(iii) conditional density of $(a_{[i]},q_{[i]},e_i,\Delta t_i)$ given $\xi_{[i-1]}$ is given by
$$
h_i(a,q,e,t|\xi_{[i-1]}) = g_i(a,q|e,t,\xi_{[i-1]})f_i(e,t|\xi_{[i-1]}).
$$
\end{corollary}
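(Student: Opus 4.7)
The plan is to deduce each part directly from Theorem \ref{th:dist} together with the deterministic recursion (\ref{eq:aq}). Part (i) is obtained by differentiating in $\tau$ the joint tail $\P[\Delta t_i > \tau, e_i = e \mid \xi_{[i-1]}] = \exp\{-\gamma_{i-1}\tau\}\pi_i(e)$ supplied by Theorem \ref{th:dist}(i); this is a one-line computation. Part (iii) is immediate from the chain rule for conditional densities once (i) and (ii) are in hand, since $\xi_{[i-]} = (e_i, \Delta t_i, \xi_{[i-1]})$ by definition and $h_i$ is the density of the pair $((a_{[i]}, q_{[i]}), (e_i, \Delta t_i))$ given $\xi_{[i-1]}$.

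The bulk of the work is in part (ii), where I would proceed by a case split on $e_i$ following (\ref{eq:aq}). In the easy cases -- a spread-filling sell limit order, or a market order/cancellation at $a_{[i-1]}$ with $q_{[i-1]} > 1$ -- the pair $(a_{[i]}, q_{[i]})$ is a deterministic function of $\xi_{[i-1]}$ and $e_i$, so its conditional density collapses to the indicator displayed in the statement. The substantive case is $e_i \in \{\BMO, \CA(a_{[i-1]})\}$ with $q_{[i-1]} = 1$: here $a_{[i]}$ equals the first tick $d_i > a_{[i-1]}$ with $A^{d_i}_{[i-]} > 0$, and $q_{[i]} = A^{d_i}_{[i-]}$. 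Thus the event $\{a_{[i]} = a,\, q_{[i]} = q\}$ rewrites as $\{A^{a_{[i-1]}+1}_{[i-]} = \cdots = A^{a-1}_{[i-]} = 0,\; A^{a}_{[i-]} = q\}$, and by the conditional independence asserted in Theorem \ref{th:dist}(iii) its probability factorises across ticks. Using the $\Bi \circ \Po$ form from Theorem \ref{th:dist}(ii), each intermediate factor $\P[A^{\pi}_{[i-]} = 0 \mid \xi_{[i-]}]$ equals the product of the binomial zero-atom $(1-\varpi_{\pi,i})^{\nu_{\pi,i}}$ and the Poisson zero-atom $e^{-\epsilon_{\pi,i}-\iota_{\pi,i}}$, while the terminal factor is precisely $\omega_{a,q,i}$. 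The boundary case $a = n+1$ (all ticks above $a_{[i-1]}$ empty) is absorbed into the convention $\omega_{n+1,0,i} = \mathbf{1}[q=0]$, consistent with the standing rule $A^{n+1} \equiv 0$.

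The only genuine obstacle is the bookkeeping inside case three of (ii): one has to verify that the product over intermediate ticks is empty (equal to $1$) exactly when $a = a_{[i-1]} + 1$, and that the definitions of $\nu_{\pi,i}, \varpi_{\pi,i}, \epsilon_{\pi,i}, \iota_{\pi,i}$ correctly encode the history-dependent zero-atom in each regime of $k_i(\pi)$. These checks are mechanical given the careful definitions preceding the statement and carry no new probabilistic content beyond Theorem \ref{th:dist}.
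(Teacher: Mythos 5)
Your proposal is correct and follows exactly the route the paper intends: the paper offers no separate argument for Corollary \ref{cor:aq}, stating only that it follows ``from the Theorem and (\ref{eq:aq})'', and your sketch fills in precisely that derivation --- differentiating the joint tail for (i), a case split on (\ref{eq:aq}) combined with the conditional independence and the $\mathrm{Bi}\circ\mathrm{Po}$ marginals of Theorem \ref{th:dist}(ii)--(iii) for the zero-atom factorisation in (ii), and the chain rule for (iii). No discrepancy with the paper's reasoning.
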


\begin{remark} As the definitions of $B$ and $A$ are symmetric, formulas for the (conditional) distribution of $(b,r)$ are symmetric to those for $(a,q)$. Moreover, as changes of $(a,q)$ and $(b,r)$ are mutually exclusive almost sure, i.e., 
$$
[\text{$(a,q)$ changes at $t$}] \Rightarrow 
[\text{$(b,r)$ does not change at $t$}]
$$
almost sure and vice versa, $(b,r)$ is conditionally constant given events which cause changes of $(a,q)$ (and vice versa) so the conditional distribution of $x_{[i]}|\xi_{[i-]}$ is uniquely determined by Corollary \ref{cor:aq} and a formula for $(b,r)$. 
\end{remark}

\noindent In the rest of the paper, we shall deal only with the ask side, i.e., with $(A,a,q)$, leaving $(B,b,q)$ aside due to the symmetry.

Because density $h_i$ depends on  all the parameters related to $A$ (i.e.,  $\kappa,\rho,\theta,\iota$), it is straightforward to estimate these parameters by Maximum Likelihood based on $h_i$ and a sample from ($\Delta t_{[i]}, e_{[i]}, a_{[i]}, q_{[i]}$). Moreover, if we put 
\begin{equation}
\label{eq:iotazero}
\iota \equiv 0
\end{equation}
and, quite realistically, assume that $\kappa$ and $\rho$ depend only on a distance to the ask, i.e., 
\begin{equation}
\label{eq:relkr}
\kappa(a,b,p)=\tilde \kappa(a-p), \qquad \rho(a,b,p)=\tilde \rho(a-p)
\end{equation}
for some $\tilde \kappa$ and $\tilde \rho$, and that the ``inside-the-book'' intensities
\begin{equation}
\label{eq:parsob}
\tilde \kappa|_{\{1,2,\dots\}}, 
\quad
\tilde \rho|_{\{1,2,\dots\}},
\end{equation}
depend on different parameters than the ``spread'' intensities
\begin{equation}
\label{eq:parsspr}
\theta,
\quad
\tilde \kappa|_{\{0,-1,-2,\dots\}},
\quad 
\tilde \rho|_{\{0,-1,-2,\dots\}},
\end{equation}
then the estimation of  the ``ask-side'' parameters could be split into the estimation of  the parameters underlying (\ref{eq:parsspr}), based on $f_i(\Delta t_i,e_i)_{i\in \N}$,  and that of the parameters underlying (\ref{eq:parssob}), based on $g_i(a_{[i]},q_{[i]})_{i\in \N}$.
In the present paper, only the latter case of an estimation is discussed. 

\section{Generalised Model}
\label{sec:gzi}

Despite the popularity of the ZI models, there is no doubt that they are too rough and neglect many aspects of real-life trading. The most obvious issues in this respect are two: 
\begin{enumerate}
\item[(i)] the volumes are non-unit in reality,
\item[(ii)] agents, at least sometimes, act strategically rather than in a random way.
\end{enumerate}
To deal with (i), authors of ZI models usually argue that the sizes of a majority of orders are one-lot and, when an order is larger, it may be imagined that several subsequent one-lot orders have been issued. However, this simplification may be tolerable only in the case of limit orders (if the order book is observed only at certain time instants then larger orders are indeed interchangeable with several subsequent unit ones) but it is unacceptable for non-unit market orders which could not be imaginatively split without a serious violation of the assumption that their arrival intensity is constant.

To justify the ZI models against (ii), it is usually argued that the strategic reasoning is so complex that taking it as random is less evil than constructing wrong models. This argument, however, has a drawback, too: no matter how reasonable it may sound, one still has to deal with empirical phenomena stemming from (bounded) rationality, such as shifts and rapid insertions and cancellations of orders (especially quotes) in response to changes of the book.

To meet those issues, the setting defined in Section \ref{ss:zi} might be generalised by assuming tbat
\begin{description}
\item[M] volumes of market orders are possibly non-unit,

\item[S1] once a (unit) limit order stops to be a quote due to an in-spread sell limit order, it is, with probability $1-\eta$, immediately cancelled or shifted to the position of the new quote,
\item[S2] once a quote jumps out of the spread, there is a non-zero probability that the move was caused not by a market order or cancellation but by a shift of the quote.
\end{description}
\def\SAL{\mathrm{SAL}}
\def\SAR{\mathrm{SAR}}
Moreover, to give more freedom to modelling of (possibly complex) behaviour of the quotes, it is allowed that 
\begin{description}
\item[D] the distribution of $\Delta t_i,e_{i}|\xi_{[i-1]}$ is arbitrary and, moreover, $q$ and $r$ may jump by more than one unit.
\end{description}
The list of events potentially changing $x$ at $t_i$ is newly as follows:
\begin{center}
\begin{tabular}{c|l}
code &   description \\
\hline
$\BMO(z)$ & a buy market order of volume $z$ \\
$\SLO(p,z)$ &
a sell limit order with price 
$b_{[i-1]} < p \leq a_{[i-1]}$ and
size $z$ put into the spread  \\
$\CA(a_{[i-1]},z)$ & a cancellation of $z$ orders of the ask \\
$\SAL(z,a)$] & a shift of $z$ orders with price $a_{[i-1]}$  to tick $a_{[i-1]}-a$, $a > 0$ 
\\
$\SAR(z,a)$] & a shift of $z$ orders with price $a_{[i-1]}$  to tick $a_{[i-1]}+a$, $a > 0$ 
\end{tabular}
\end{center}
plus symmetric events concerning $b$.

Even though, given this generalisation, (i) of Theorem \ref{th:dist} ceases to be true, 
its Assertions (ii) and (iii) would keep holding with $\eta \varpi_{p,i}$ instead of $\varpi_{p,i}$ if it is additionally assumed that
\begin{itemize}
\item the volumes of market orders are i.i.d. random, independent of all the past and the present events on the market
\item  the (Bernoulli) variables indicating the shifts are mutually independent, independent of the past and the present events on the market.\footnote{To see it, note that the proof continues to be valid with $\P(\Bi(q_{[i-1]},\eta)=\alpha)$ instead of $\mathbf{1}[\alpha = q_{[i-1]}]$ in (\ref{eq:fixed}).}
\end{itemize}
What does change, however, is the distribution of quote jumps outside of the spread after the ask is depleted, which are now non-zero if and only if
\begin{equation}
\label{eq:eset}
[e_i \in \{\BMO(z),\CA(q_{[i-1]}),\SAR(z,a)\},z \geq q_{[i-1]}, a> 0],
\end{equation}
in which case
$$
(a_{[i]},q_{[i]}) = (a,q)
\Leftrightarrow
\begin{cases}
M_{a,i} \leq s_i, M_{a,i} + A^a_{[i-]} = s_i + q,
& 
a \leq n, \\
M_{n+1,i} \leq s_i & a = n+1, q=0,
\end{cases}
$$
where
$$
M_{a,i}= \sum_{j=a_{[i-1]}+1}^{a-1} A^j_{[i-]}
$$
is the total number of the orders in the book between the former and the present value of the ask and 
$$
s_i
=
\begin{cases}
z - q_{[i-1]} & \text{if $e_i=\BMO(z)$,} \\
-z & \text{if $e_i=\SAR(z,a)$ for some $a$,}\\
0 & \text{if $e_i = \CA(a_{[i-1]}).$}
\end{cases}
$$
is the number of the orders the particular change would like to ``eat'' from the inside of the book. 

As, by (ii) and (iii) of (the modified version of) Theorem \ref{th:dist},
$$ 
M_{a,i}|\xi_{[i-]} \sim 
\Bi(\nu_{a_{[i-1]}+1,i}, \eta \varpi_{a_{i-1}+1,i}) 
\circ
\dots
\circ
\Bi(\nu_{a-1,i}, \eta \varpi_{a-1,i}) 
\circ \Po
\left(
\sum_{j=a_{[i-1]}+1}^{a-1} [\epsilon_{j,i}+\iota_{j,i}]
\right)
$$
and, by (iii) of the Theorem, $A^a_{[i-]}$ is conditionally independent of $M_{a,i}$ given $\xi_{[i-]}$, we are getting

\begin{corollary} \label{cor:aqh} On set (\ref{eq:eset}),
the distribution of $(a_{[i]},q_{[i]})|\xi_{[i-]}$ is given by density
$$
\tilde g_i(a,q|\xi_{[i-]})=
\begin{cases}
\sum_{j=0}^{s_i} \P[M_i=j|\xi_{[i-]}]
\P[A^a_{[i-]}= s_i+q-j|\xi_{[i-]}]
& 
a_{[i-1]}<a \leq n,
\\
\P[M_i\leq s_i|\xi_{[i-]}]
& 
a = n+1, q=0.
\end{cases}
$$             
\end{corollary}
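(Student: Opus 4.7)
The approach is to combine the explicit characterisation of $(a_{[i]},q_{[i]})$ in terms of $M_{a,i}$, $A^a_{[i-]}$ and $s_i$ given in the display immediately preceding the statement with the law of total probability, invoking the conditional mutual independence of $A^1_{[i-]},\dots,A^n_{[i-]}$ provided by the (modified) Theorem \ref{th:dist}(iii).

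First I would fix $a$ with $a_{[i-1]}<a\leq n$ and use the stated equivalence to write
\[
\P[(a_{[i]},q_{[i]})=(a,q)\,|\,\xi_{[i-]}] = \P[M_{a,i}\leq s_i,\ M_{a,i}+A^a_{[i-]}=s_i+q\,|\,\xi_{[i-]}].
\]
Then I would decompose the right-hand side by conditioning on the value of $M_{a,i}$: fixing $M_{a,i}=j$ forces $A^a_{[i-]}=s_i+q-j$, while the constraint $M_{a,i}\leq s_i$ restricts $j$ to $\{0,1,\dots,s_i\}$, giving
\[
\sum_{j=0}^{s_i} \P[M_{a,i}=j,\ A^a_{[i-]}=s_i+q-j\,|\,\xi_{[i-]}].
\]

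The crucial step is to factor the joint probability into a product. Since $M_{a,i}$ depends only on $A^{a_{[i-1]}+1}_{[i-]},\dots,A^{a-1}_{[i-]}$, the mutual conditional independence of $A^1_{[i-]},\dots,A^n_{[i-]}$ given $\xi_{[i-]}$ implies that $M_{a,i}$ and $A^a_{[i-]}$ are themselves conditionally independent. Hence the summand splits into $\P[M_{a,i}=j\,|\,\xi_{[i-]}]\cdot\P[A^a_{[i-]}=s_i+q-j\,|\,\xi_{[i-]}]$, which is exactly the formula claimed. The boundary case $a=n+1$, $q=0$ is a direct restatement of the characterisation $M_{n+1,i}\leq s_i$ (no summation needed since $q_{[i]}=0$ is automatic once the book above $a_{[i-1]}$ has been exhausted).

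The only non-routine element is the conditional independence of $M_{a,i}$ and $A^a_{[i-]}$, but this is delivered by Theorem \ref{th:dist}(iii) in its form adapted to the GZI setting (with $\eta\varpi_{p,i}$ replacing $\varpi_{p,i}$), whose validity the text preceding the corollary has already justified. I therefore expect the whole argument to reduce to bookkeeping once that independence has been explicitly invoked.
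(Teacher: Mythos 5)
Your proposal is correct and follows essentially the same route as the paper: the paper derives the corollary directly from the characterisation of $(a_{[i]},q_{[i]})$ via $M_{a,i}$, $A^a_{[i-]}$ and $s_i$, the convolution formula for the law of $M_{a,i}|\xi_{[i-]}$, and the conditional independence of $A^a_{[i-]}$ and $M_{a,i}$ supplied by (iii) of the (modified) Theorem \ref{th:dist}. Your explicit law-of-total-probability decomposition over $j=M_{a,i}\in\{0,\dots,s_i\}$ and the factorisation of the summands is precisely the bookkeeping the paper leaves implicit.
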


\begin{remark} As both the value of the ask and its volume are uniquely determined by $\xi_{[i-]}$ outside set (\ref{eq:eset}), the distribution of $(a_{[i]},q_{[i]})|\xi_{[i-]}$ is Dirac outside set (\ref{eq:eset}).
\end{remark} 

\noindent Thus, if we keep assuming (\ref{eq:iotazero}) and (\ref{eq:relkr}), we may estimate parameters underlying $\tilde \kappa|_{\{1,2,\dots\}}$ and $\tilde \rho|_{\{1,2,\dots\}}$ together with $\eta$ by means of MLE based on $\tilde g_i$; as it could be checked in Appendix \ref{sec:mle}, the asymptotic properties of the estimator are not harmed by the generalisation provided that the generalised process is ergodic, which may be guaranteed, e.g., by requiring that
\begin{description}
\item[E] $q_{t}+r_t$ is stochastically dominated by a Markov chain with a zero recurrent state.
\end{description}
Under this assumption, Proposition \ref{ergprop} keeps holding because none of the generalisations, except for D, which is treated by E, increase the total number of orders in $\Xi$ in comparison with the ZI.

\begin{remark} Unlike in the unit-volume ZI model, price impact may be predicted in the GZI model: denoting $m_i$ the impact of a trade with volume $z$ at time $t$, we have
\begin{multline*}
\P[m_i(z)= m|\xi_{[i-1]}=\xi,t_i = t,, e_i = \mathrm{BMO}(z)]
=\P[\Delta a_{[i]}= m|\xi_{[i-1]}=\xi,t_i = t, e_i = \BMO(z)]
\\
=\sum_{q=0}^\infty \tilde g_i(a_{[i-1]}+m,q|\xi_{[i-1]}=\xi,t_i = t, e_i = \BMO(z)).
\end{multline*} 
\end{remark}

\section{Empirical Evidence}
\label{sec:empirical}

\subsection{Data}
\label{datasec}
For our empirical analysis, tick-by-tick trade and quote data was used, i.e., values
$$
(x_{[i]},\tau_{[i]})_{i\in \N}
$$
where $\tau_{[i]}$ is an amount traded at $t_i$, defined by
$$
\tau_{[i]}=
\begin{cases}
z & e_i = \SMO(z) \\
-z & e_i = \BMO(z) \\
0 & \text{otherwise}.
\end{cases}
$$
In particular, data of six stocks
\begin{itemize}
\item Exxon Mobile (XOM) 
\item Microsoft (MSFT), 
\item General Electric (GE) 
\item MarketAxess Holdings (MKTX)
\item J2 Global (JCOM)
\item American Realty Investors (ARL)
\end{itemize}
from nine US electronic markets 
\begin{itemize}
\item NASDAQ OMX BX
\item NSE
\item Chicago
\item NYSE
\item ARCA
\item NASDAQ T
\item CBOE
\item BATS
\item ISE
\end{itemize} 
from ten months starting from March $2009$ were analysed.

Our choice of the titles was done so as to cover both ``large'' and the ``smaller'' stocks. The first three ones - XOM, MSFT and GE -  belong to the top ten companies by market capitalization and usually exhibit large trading volumes. MKTX and JCOM, on the ather hand, are usually ranked as ``small caps'' (having market capitalization around \$1 bilion) with moderate trading volumes. Company ARL, belonging to the category of ``micro caps'' (capitalisation around \$100 million), could be counted among illiquid stocks.

As to the markets, we did not pre-select which of them to analyse; contrarily, all the markets were included on which the examined stocks were traded and for which at least some data were available\footnote{Our dataset came from Tickdata, Inc.}- the only exception is NASDAQ ADF, which is a platform for recording trades rather than a limit order market (see \cite{FinraADF}).

As the trade data (process $\tau$) came from a different data source than the quote data (process $x$), it was necessary to algorithmically match individual trades with the L1 data changes in which our algorithm, designed for that purpose, succeeded in about 70 percent of trades (the rest could not be uniquely attributed to any L1 change ). This fact, however, did not harm the estimation other way than by decreasing the sample size.\footnote{To be absolutely rigorous, we should assume that the success in matching is stochastically independent of the quote process  to claim this.}

\subsection{Estimation}

For the actual estimation, only records originating between 9:40 a.m. and 3:30 p.m. were used when the process could be assumed to be near its stationary distribution; the inclusion of the ten-minute ``warm-up'' period following the opening at 9:30 also partially justifies our assumption (\ref{eq:iotazero}).

When estimating within the ZI models, all the jumps of $a$ up\footnote{The jumps of $a$ down and the changes of $(x,\tau)$ preserving $a$ could be omitted without any loss of information because the corresponding these changes are conditionally constant hence not depending on any parameter.} could serve as a sample while, in the case of the GZI models, only the jumps of $a$ matched with trades were included; the reason for this restriction is that the jumps of $a$ caused by unpaired trades, shifts and cancellations could not be distinguished from each other in the L1 data, so the corresponding values of $s_i$ could not be determined. For each paired trade record, on the other hand, the value of $s_i$ may be obtained from the volume of the trade and from the (known) volume of the former ask.



For both the ZI and the GZI settings, and for each stock-market pair, the following estimation procedure was run: as its first step, a simple version of the model with 
\begin{equation*}
\tilde\kappa(\bullet)\equiv \kappa_0, \quad \tilde \rho(\bullet)\equiv \rho_0, \quad \rho_0>0,\quad  \kappa_0>0\qquad (S)
\end{equation*} (\cite{Smith03} in its ZI variant), which we later call {\em basic model}, was estimated. Subsequently, three {\em power tail} models with $\tilde \kappa$ and $\tilde \rho$, defined by 
\begin{equation*}
\tilde \kappa(i)=\begin{cases} 
\kappa_{i-1} & i \leq n \\
\kappa_{n-1} (i-n)^{\alpha_\kappa} & i > n
\end{cases}
\quad
\tilde \rho(i)=\begin{cases} 
\rho_{i-1} & i \leq n \\
\rho_{n-1} (i-n)^{\alpha_\rho} & i > n,
\end{cases}
 \qquad(T_n) \qquad \qquad n=1,2,3,
\end{equation*}
were estimated gradually. As a ``true'' model, the one was chosen in which all the parameters came out significant while the likelihood ratio comparing this model with the subsequent one came out insignificant. 

Each estimation procedure was run on a sample of at most $5,000$ observations; even though there were many more observations available for a majority of the stock-market pairs, a restriction had to be made due to the large time requirement of the estimation, caused by complex formulas for conditional densities involved. In case that the optimisation algorithm could not find maximum of the ML function within a time limit of $2,500$ sec (approx. $42$ min) for none of the four variants of the model, the procedure was repeated once more with a sample of size $1,000$.

The prediction power of each model $m$ was evaluated by 
$$
P_m=1-
\frac
{\sum_{i=N}^{N+M}|a_{[i]}^+- \E(a_{[i]}^+|\xi_{[i-]})|}
{\sum_{i=N}^{N+M}|a_{[i]}^+ - \bar {a^+}_N|}
$$ 
where $N$ is a sample size,  $M\doteq N/10$ is the number of out-of-sample observations, $a^+_1,\dots,a^+_{M+N}$ are the magnitudes of the jumps of $a$ up, $\bar a^+_N$ is their average and where the conditional expectation in the numerator was computed by means of the estimated parameters. By its construction, $P_m$ can be understood as a percentage improvement in comparison with a naive prediction of the out-of-spread jumps by their mean. In the GZI setting, $P_m$ may also be seen as a measure of accuracy of the market impact prediction.

For automation of the whole procedure including matching of trades, and the model selection, a {\tt C++} package has been developed by the author, using the NLOPT library, namely LBFGS and BOBYQUA optimisation algorithms, to compute maxima of the ML functions.\footnote{As a default choice LBFGS  has been used, the BOBYQUA was employed only in instances in which it could not find a maximum within the time limit). For more on these algorithms and the NLOPT library, see \cite{Johnson12}.} A source code of the package is available at \url{https://github.com/cyberklezmer/fepp} under branch {\tt qf15}.

\subsection{Results}

Results of the estimation procedure of the ZI modelsd for the ``big-cap'' stocks XOM, MSFT and GE are summarised in Table \ref{table:sumz}. It may be seen that 21 out of all the 27 estimations were successful in the sense that a significant model was found, which exhibits a positive improvement (measured by $P_m$). Out of the ``failed'' cases, four times a significant model was found giving a negative $P_m$\footnote{In two of these instances, all the jump increments were exactly one; hence their average predicted the jumps with zero error, which lead to minus infinity for $P_m$.}, once the time limit was reached even with a reduced sample, and once no data was available.

Out of all the 21 successful instances, the basic model $(S)$ was chosen nine times, the simplest tail model $(T_1)$ six times and model $(T_2)$ six times; model $(T_3)$ never came out significant within the time limit. 
Higher $P_m$'s values in comparison with a basic model were reached in only seven out of the 12 cases when a tail model was selected, while the basic model came out better in two cases; in the three remaining cases, the $P_m$ was identical up to two decimal digits. Even though, on average, the tail model performed better, the difference was not found statistically significant according to a Wilcoxon test. Detailed results of this estimation can be found in Appendix \ref{zires}.

Summary results for the GZI model for big-caps may be found in Table \ref{table:sumg}. Here, the estimation was successful in 21 out of 25 cases with at least some data, three times the time-out was reached;  $P_m$ came out negative in one case.

In order to answer the question whether it is worth using the more complicated GZI model rather than the ZI one, the estimation procedure was run for ZI models once more using the same samples as those used for the GZI ones (i.e., only the changes with trades matched); the corresponding results are displayed in the second column below each stock. Out of the 20 cases when both estimations were successful, eight times the GZI model reached a higher $P_m$ while three times $P_m$ was higher for the ZI model; nine times $P_m$ was the same up to two decimal digits. Again, it could not be statistically proved that the GZI model performs significantly better.

Detailed results of the GZI estimation may be found in Appendix \ref{gzires} (the GZI models) and Appendix \ref{gzitres} (their ZI counterparts).

Tables \ref{table:alphakappa} and \ref{table:alpharho} summarise estimates of tail exponents of $\tilde \kappa$ and $\tilde \rho$. It may be seen that a dispersion of the values of $\alpha_\kappa$ is great; however, their average $-2.3(0.41)$ is not far from present empirical evidence (see, e.g., \cite{chakraborti2011econophysicsi}, III.C).\footnote{It should be noted, however, that only the first several values of $\tilde \kappa$ and $\tilde \rho$ played a role in the estimation, especially when the average jump out of the spread was small, so the results say very little about the actual power law behaviour.} The results for tail exponents of $\tilde \rho$, which have never been statistically estimated yet to the best of the author's knowledge, are similar to the case of $\alpha_\kappa$, with average value $-2.48(0.42)$.

Results of the estimation of the small caps MKTX, JCOM and the micro-cap ARL may be found in Table \ref{table:sumzs} and Table \ref{table:sumgs}. For the small caps, the ZI model was successful for 10 of 13 pairs with sample size at least $N>20$, once a negative $P_m$ was reached, two times the optimization failed.\footnote{In particular, the solver stopped the optimization without changing the initial parameters reporting that its tolerance criterion was reached which suggest that the MLE function is too flat to be optimized.} For ARL, the procedure was successful only once (with a disappointing prediction power) with all of the failures due to failed optimizations. A closer look to the results suggests two possible causes of the failures: small sample sizes and/or large average jumps of $a$, causing large evaluation times of the densities (see Appendix \ref{sec:detail} for detailed values).
The situation is not much better in the case of GZI models either: small caps are successful in 7 out of 9 cases, ARL is only once half-way successful out of three cases. 
Results of a comparison of more complex ZI models with their basic variant are similar to those of the big-caps: out of 8 cases when the comparison is possible, more complex models won three times, four times the basic model was more successful, once the results were the same. Similarly, from 8 comparisons of GZI and ZI variants of the model, the GZI variant was more successful four times, the ZI one three times, once the results were equal.

\begin{table}
\begin{center}
\begin{tabular}{l|C{2.7cm}|C{2.7cm}|C{2.7cm}}
& XOM & MSFT & GE \\
\hline
NASDAQ OB & $T_1(<0)$     
		  & $T_1(0.70,0.70)$  
		  & $S(0.48)$  \\
NSE       & $T_2(0.91,0.90)$ 
          & $T_2(0.89,0.87)$ 
          & $T_1(0.76,0.76)$  \\
Chicago   & $T_2(0.47,n/a)^-$  
          & $S(0.93)$ 
          & $T_1(0.76,0.71)$ \\
NYSE      & $S(0.33)$ 
          & $\times$ 
          & $S(0.53)$ \\
ARCA      & $T_1(0.92,0.91)$ 
          & $S(-\infty)^\star$          
          & $t$   \\
NASDAQ T  & $T_1(0.53,0.53)$            
          & $S(1.00)^\star$  
          & $S(-\infty)^\star$ \\ 
CBOE      & $S(<0)$ 
          & $T_2(0.35,0.36)$ 
          & $S(0.67)$ \\
BATS      & $T_2(0.86,0.83)$            
          & $S(1.00)$ 
          & $S(0.00)^\star$ \\
ISE       & $T_2(0.66,0.67)$ 
          & $S(0.66)$ 
          & $T_1(0.94,0.93)^\star$ \\
\end{tabular}

\caption{Results for ZI. Notation: $m(p,p_0)$ - model $m$ was selected with $P_m=p$ and with $P_S=p_0$, minus sign - sample less than $5000$ was available, $\star$ - a reduced sample used, $t$ -  time limit reached when without finding a result, $\times$ -  less than $20$ observations were available. }
\label{table:sumz}
\end{center}
\end{table}

\begin{table}
\begin{center}

\begin{tabular}{l|C{1.75cm}C{1.75cm}|C{1.75cm}C{1.75cm}|C{1.75cm}C{1.75cm}}
&\multicolumn{2}{c}{XOM} & \multicolumn{2}{c}{MSFT} & 
\multicolumn{2}{c}{GE} \\
& GZI & ZI & GZI & ZI& GZI & ZI \\
\hline
NASDAQ OB & $T_2(0.19)$  & $T_2(0.17)$     
		  & $T_1(0.68)$  & $T_1(0.66)$
		  & $T_1(0.74)$  & $S(0.74)$\\
NSE       & $T_1(0.70)^\star$        & $T_1(0.69)^\star$ 
          & $T_2(0.38)$  & $T_2(0.29)$
          & $S(0.33)$ & $t$\\
Chicago   & $T_2(0.35)^-$& $T_2(0.34)^-$
          & $T_2(0.49)^-$& $T_1(0.51)$    
          & $T_1(0.79)$  & $T_1(0.79)$\\
NYSE      & $S(0.17)$  & $S(0.13)$  
          & $\times$        & $\times$
          & $S(1.00)^\star$        & $S(1.00)^\star$\\
ARCA      & $T_2(0.84)$  & $T_2(0.84)$        
          & $t$        & $S(0.00)^\star$
          & $t$        & $t$\\
NASDAQ T  & $S(0.74)$  & $T_1(0.74)$        
          & $\times$        & $\times$    
          & $t$        & $t$\\ 
CBOE      & $T_2(<0)^-$   & $T_2(<0)^-$      
          & $T_2(0.72)^-$ & $S(0.68)^-$
          & $T_2(0.23)$  & $T_2(0.23)$\\
BATS      & $T_1(0.31)$  & $T_1(0.60)$         
          & $S(0.00)^\star$        & $S(0.00)^\star$
          & $S(0.00)^\star$    & $S(0.00)^\star$\\
ISE       & $T_1(0.44)^\star$        & $S(0.42)^\star$
          & $T_1(0.96)^\star$        & $T_1(0.96)^\star$
          & $S(0.36)^\star$  & $S(0.42)^\star$
\end{tabular}

\caption{Results for GZI. Notation: $m(p)$ model $m$ was used with  $P_m=p$, minus sign - sample with less than $5,000$ available, $\star$ - a reduced sample used, $t$ -  time limit reached without finding a result, $\times$ - less than $20$ observations were available}
\label{table:sumg}
\end{center}
\end{table}

\begin{table}
\begin{center}
\begin{tabular}{l|C{1.7cm}C{1.7cm}|C{1.7cm}C{1.7cm}|C{1.7cm}C{1.7cm}}
&\multicolumn{2}{c}{XOM} & \multicolumn{2}{c}{MSFT} & 
\multicolumn{2}{c}{GE} \\
& ZI & GZI & ZI & GZI& ZI & GZI \\
\hline
NASDAQ OB & $-5.49$   & $-2.60$      
		  & $-0.38$   & $-0.48$ 
		  & $n/a$   & $-4.15$  \\
NSE       & $-1.11$   & $-2.83$      
          & $-1.30$   & $-2.64$
          & $-0.12$   & $n/a$ \\
Chicago   & $-2.31$   & $-2.28$
          & $n/a$   & $-0.75$
          & $-9.72$   & $-1.47$\\
NYSE      & $n/a$   & $n/a$
          & $\times$ & $\times$
          & $n/a$   & $n/a$\\
ARCA      & $-2.18$   & $-1.75$
          & $n/a$ & $t$        
          & $t$ & $t$ \\
NASDAQ T  & $-0.84$   & $n/a$
          & $n/a$ & $\times$             
          & $n/a$   & $t$\\ 
CBOE      & $n/a$   & $-1.70$
          & $-0.97$   & $-0.34$
          & $n/a$   & $-2.12$\\
BATS      & $-1.15$   & $-4.84$
          & $n/a$   & $n/a$
          & $n/a$   & $n/a$\\
ISE       & $-1.18$   & $-2.76$
          & $n/a$   & $-6.76$
          & $-0.26$   & $n/a$
\end{tabular}

\caption{Estimated values of $\alpha_\kappa$. Notation: $n/a$ - only model $S$ came out significant, $t$ - time limit reached, $\times$ - no data available.}
\label{table:alphakappa}
\end{center}
\end{table}

\begin{table}
\begin{center}
\begin{tabular}{l|C{1.7cm}C{1.7cm}|C{1.7cm}C{1.7cm}|C{1.7cm}C{1.7cm}}
&\multicolumn{2}{c}{XOM} & \multicolumn{2}{c}{MSFT} & 
\multicolumn{2}{c}{GE} \\
& ZI & GZI & ZI & GZI& ZI & GZI \\
\hline
NASDAQ OB & $-5.37$   & $-2.60$      
		  & $-0.26$   & $-0.60$ 
		  & $n/a$   & $-4.22$  \\
NSE       & $-1.84$   & $-2.89$      
          & $-0.89$   & $-2.31$
          & $-0.61$   & $n/a$ \\
Chicago   & $-2.13$   & $-3.22$
          & $n/a$   & $-1.94$
          & $-8.28$   & $-1.77$\\
NYSE      & $n/a$   & $n/a$
          & $\times$ & $\times$
          & $n/a$   & $n/a$\\
ARCA      & $-1.77$   & $-0.77$
          & $n/a$ & $t$        
          & $t$ & $t$ \\
NASDAQ T  & $-0.67$   & $n/a$
          & $n/a$ & $\times$             
          & $n/a$   & $t$\\ 
CBOE      & $n/a$   & $-2.21$
          & $-0.03$   & $-9.24$
          & $n/a$   & $-3.39$\\
BATS      & $-0.38$   & $-4.59$
          & $n/a$   & $n/a$
          & $n/a$   & $n/a$\\
ISE       & $-0.94$   & $-2.43$
          & $n/a$   & $-3.56$
          & $-0.66$   & $n/a$
\end{tabular}

\caption{Estimated values of $\alpha_\rho$. Notation: $n/a$ - only model $S$ came out significant, $t$ - time limit reached, $\times$ - no data available. }
\label{table:alpharho}
\end{center}
\end{table}

\begin{table}
\begin{center}
\begin{tabular}{l|C{2.5cm}|C{2.5cm}|C{2.5cm}}
& MKTX & JCOM & ARL \\
\hline 
NASDAQ OB & $f^\star$     
		  & $T_1(0.06,0.06)$  
		  & $\times$  \\
NSE       & $\times$ 
          & $f^\star$ 
          & $\times$  \\
Chicago   & $\times$  
          & $\times$ 
          & $\times$ \\
NYSE      & $\times$ 
          & $\times$ 
          & $f$ \\
ARCA      & $T_2(0.40,0.45)$ 
          & $T_2(0.18,0.17)$          
          & $f^\star$   \\
NASDAQ T  & $T_2(0.44,0.46)$            
          & $T_2(0.38,0.37)$  
          & $S(0.08)^-$ \\ 
CBOE      & $S(0.92)^\star$ 
          & $T_2(<0,<0)$ 
          & $f^-$ \\
BATS      & $T_2(0.13,0.35)$            
          & $T_2(0.60,0.52)$ 
          & $f^-$ \\
ISE       & $S(0.31)^\star$ 
          & $T_2(0.39,0.49)$ 
          & $f^-$ \\
\end{tabular}

\caption{Results for ZI of the small caps. Notation:  $f$ - optimization failed, for the rest, see Table \ref{table:sumz}.}
\label{table:sumzs}
\end{center}
\end{table}

\begin{table}
\begin{center}

\begin{tabular}{l|C{1.8cm}C{1.8cm}|C{1.8cm}C{1.8cm}|C{1.8cm}C{1.8cm}}
&\multicolumn{2}{c}{MKTX} & \multicolumn{2}{c}{JCOM} & 
\multicolumn{2}{c}{ARL} \\
& GZI & ZI & GZI & ZI& GZI & ZI \\
\hline
NASDAQ OB & $\times$  & $\times$     
		  & $T_2(0.78)^-$  & $T_1(0.79)^-$
		  & $\times$  & $\times$\\
NSE       & $\times$  & $\times$ 
          & $\times$  & $\times$
          & $\times$ & $\times$\\
Chicago   & $\times$& $\times$
          & $\times$& $\times$    
          & $\times$  & $\times$\\
NYSE      & $\times$  & $\times$  
          & $\times$        & $\times$
          & $S(<0)^-$        & $T_2(<0)^-$\\
ARCA      & $T_2(0.31)^-$  & $T_2(0.25)^-$        
          & $S(0.08)$        & $T_2(0.04)$
          & $f^-$        & $f^-$\\
NASDAQ T  & $\times$  & $\times$        
          & $\times$        & $\times$    
          & $S(<0)^-$        & $T_1(0.49)^-$\\ 
CBOE      & $S(1.00)^-$   & $S(1.00)^-$      
          & $T_1(0.29)^-$ & $T_1(0.25)^-$
          & $\times$  & $\times$\\
BATS      & $S(0.41)^-$  & $S(0.39)^-$         
          & $T_2(0.65)$        & $S(0.66)$
          & $\times$    & $\times$\\
ISE       & $S(<0)^-$        & $S(<0)^-$
          & $S(<0)^-$        & $S(<0)^-$
          & $\times$  & $\times$
\end{tabular}
\caption{Results for GZI of the small caps. Notation: $f$ - optimization failed, for the rest,  see Table \ref{table:sumg}}
\label{table:sumgs}
\end{center}
\end{table}

\section{Conclusions}
\label{sec:conclusion}
A setting covering many of the existing zero intelligence (ZI) models  and its generalisation allowing for non-unit market orders and shifts of quotes (GZI) were introduced. Several variants of both the ZI and GZI settings, differing in their complexity, were tested on trade and quote data for 54 real-life stock-market pairs. It was found that, especially for liquid stocks, both the ZI and GZI models came out significant with a substantial prediction power; however, their more complicated variants did not produce significantly better predictions in comparison with a simple model with constant intensities. Finally, as the rewults of the ZI and the GZI variants are comparable, our suggestion is to use the GZI models which are more realistic and are capable of predictions of market impact.

\bibliographystyle{rQUF}

\appendix

\section{Proof of Theorem \ref{th:dist}}
\label{sec:proofs}
\def\IDP{\mathcal{I}}

Before starting the proof, recall that that a Markov chain $X$ on $\R^+$ is called {\em immigration and death process} with immigration intensity $\kappa$ and death intensity $\rho$ and with initial population $A$ (we abbreviate this by $\IDP(\kappa, \rho, A)$) if its transition matrix $\Lambda=(\lambda_{i,j})$ 
has zero components except of 
$\lambda_{j,j+1}=\kappa$, $\lambda_{j,j-1}=j\rho$, $j > 0$ and if  $X_0=A$. Notice also that number of customers in an $M/M/\infty$ in queuing theory model follows an $\IDP$ process. 
It is well known (see, e.g., \cite{mandjes2011m}, Section 2) that, for any positive $t$, 
\begin{equation}\label{eq:idp}
\IDP(\kappa, \rho, A)_t \sim  \mathrm{Po}
\left(\frac{\kappa}{\rho}[1-\exp\{-\rho t\}]\right) \circ \mathrm{Bi}(A,\exp\{-\rho s\}).
\end{equation} 

Further, let us introduce the following (re)formulation of process $\Xi$ restricted to time interval $(0,t_1]$, which will be used repeatedly in the subsequent proof:
Let $u^1,v^1,u^2,v^2, \dots, u^n,v^n$ be independent uniform variables, 
independent of $\Xi_0$. By \cite{Kallenberg02}, Theorem 6.10, there exist mappings $U^1, V^1,\dots,U^n,V^n$ from $\N_0^n \times [0,1]$ to 
the space of stochastic processes on $\R^+$ such that, for each $\pi$, $U^\pi(A_0, u^\pi)|\Xi_0 \sim \IDP(\kappa_{\pi,0},\rho_{\pi,0}, A^\pi_0)$ and $V^\pi(B_0, v^\pi)|\Xi_0 \sim \IDP(\lambda_{\pi,0},\sigma_{\pi,0}, B^\pi_0)$. By \cite{Kallenberg02}, Proposition 6.13., $U^1,V^1,U^2,V^2, \dots, U^n,V^n$ are mutually conditionally independent given $\Xi_0$, so 
we may assume, without a change of distributions, 
\begin{equation}\label{eq:redef}
A_t = (U_t^1,\dots,U_t^n), \qquad B=(V_t^1,\dots,V_t^n)
\qquad \text{on $[t\leq t_1]$}.
\end{equation}
(to see it, check definitions (\ref{eq:d1}) and (\ref{eq:d2})).

Using this reformulation and noting that both $e_1$, $\Delta t_1$ are functions of $(U^\pi,V^\pi)_{\pi\leq a_0}$,
we see that 
\begin{equation}\label{eq:condi}
(t_1,e_1) \indep_{\Xi_0} (U^{\pi})_{\pi > a_i}
\end{equation} 
where $A \indep_C B$ denotes conditional indepdence of $A$ and $B$ given $C$.

Because, by the definition of the ask, 
$A^1_{[i-]}=A^2_{[i-]}=\dots=A^{a_0-1}_{[i-]}=0$ and $A^{a_0}_{[i-]}=q_0$, i.e, $A^1_{[i-]},\dots,A^{a_0}_{[i-]}$ 
are conditionally constant, we have, for any $s_{1},s_{2},\dots,s_n$,
\begin{multline}\label{eq:pu}
\P[A^{1}_{[1-]}=s_{1},A^{2}_{[1-]}=s_{2}, \dots,A^{n}_{[1-]}=s_{n}|e_1,t_1,\Xi_0]
\\
=\mathbf{1}[s_1=s_2=\dots=s_{a_{0}-1}=0, s_{a_0}=q_0] \cdot \P[U^{a_0+1}_{t_1-}=s_{a_0+1},\dots,U^n_{t_1-}=s_n|e_1,t_1,\Xi_0]
\\
=\mathbf{1}[s_1=s_2=\dots=s_{a_{0}-1}=0, s_{a_0}=q_0] 
\cdot
\prod_{\pi=a_0+1}^{n} 
\P[U^\pi_{t_1}=s_\pi]
\end{multline}
where the last ``='' follows from \cite{Hoffmann94}, (6.8.14) (with 
$R=(u^{a_0+1},u^{a_0+2},\dots,u^n)$ and $S=(e_1,t_1,\Xi_0)$) and from the fact that the jumps of $U$'s do not coincide with probability one (which guarantees that $U^\pi_{t_1}=U^\pi_{t_1-}$, $\pi > a_0$, almost sure).

As a first step of proving the assertions of the Theorem, let us deal with one-step-ahead distribution
$\P[(\Delta t_i, e_i,A_{[i-]})\in \bullet|\Xi_{[i-1]}]$:
\begin{proposition}\label{biprop} (i) For any $\tau$ and $e$,
$$
\P[\Delta t_i > \tau,e_i=e|\Xi_{[i-1]}]
= \exp\{-\gamma_{i-1} \tau\} \pi_{i}(e).
$$
(ii) For any $p$,
$$
A^p_{[i-]}|e_i, \Delta t_i, \Xi_{[i-1]} 
\sim
\begin{cases}
\dirac_{0} & \text{on $[1 \leq p< a_{[i-1]}]$}
\\
\dirac_{q_{[i-1]}} & \text{on $[p=a_{[i-1]}]$}
\\
\mathrm{Po}
\left(\phi_{p,i-1} (1-\delta_{p,i-1})
\right)
\circ
\mathrm{Bi}
\left(
A_{[i-1]}^p,\delta_{p,i-1}
\right) & \text{on $[a_{[i-1]} < p \leq n]$}. \\
\end{cases}
$$
(iii) For any $p$,
\begin{multline*}
A^p_{[i]}|e_{i+1},\Delta t_{i+1}, x_{[i]},
\Delta t_i,\Xi_{[i-1]} 
\\
\sim	
\begin{cases}
\dirac_{A_{[i]}} & \text{on $[1 \leq p \leq \max(a_{[i]},a_{[i-1]})]$}
\\
\mathrm{Po}
\left(\phi_{p,[i-1]} (1-\delta_{p,i-1})
\right)
\circ
\mathrm{Bi}
\left(
A_{[i-1]}^p,\delta_{p,i-1}
\right)
&
\text{on $[\max(a_{[i]},a_{[i-1]}) < p \leq n]$}.
\end{cases}
\end{multline*}  
\end{proposition}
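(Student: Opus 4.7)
By the strong Markov property of $\Xi$ at $t_{i-1}$ it suffices to treat the case $i=1$; write $\Xi_0$ for $\Xi_{[0]}$ and $(t_1,e_1)$ for $(\Delta t_1,e_1)$. On $[0,t_1]$ I adopt the reformulation (\ref{eq:redef}), so that the independent immigration--death processes $(U^\pi,V^\pi)_\pi$ drive the dynamics there and satisfy the conditional independence (\ref{eq:condi}).

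For (i), observe that $t_1$ is the first time an event affecting $x$ occurs, namely a market order, an in-spread limit order, an arrival at the ask or bid tick, or a cancellation of a quote. On $[0,t_1)$ both $(a,b)$ and $(q,r)$ are frozen at their initial values, so the intensities of all such ``active'' events are constant in $t$ and sum to exactly $\gamma_0$. Arrivals at non-quote ticks and their cancellations do not affect $x$, and hence are irrelevant for $(t_1,e_1)$, even though their intensities do fluctuate randomly with $A^\pi_t$. The standard superposition/thinning result for independent Poisson flows then delivers the exponential waiting time with rate $\gamma_0$ together with the event-type probabilities $\pi_1(e)$.

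For (ii), if $p<a_0$ no sell limit order can arrive at $p$ without falling into the spread and triggering an active event before $t_1$, so $A^p_t\equiv A^p_0=0$ on $[0,t_1)$; similarly $A^{a_0}_t\equiv q_0$ there. For $p>a_0$ the factorisation (\ref{eq:pu}) reduces the law of $A^p_{[1-]}$ to that of the marginal $U^p_{t_1}$, and (\ref{eq:condi}) permits me to treat $t_1$ as a deterministic argument $\tau$: conditionally on $(\Xi_0,t_1=\tau,e_1=e)$, $U^p_{t_1}$ is distributed as $\IDP(\kappa_{p,0},\rho(a_0,b_0,p),A^p_0)_\tau$, and (\ref{eq:idp}) with $t=\tau$ then yields the announced convolution $\mathrm{Po}(\phi_{p,0}(1-\delta_{p,0}))\circ\mathrm{Bi}(A^p_0,\delta_{p,0})$.

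For (iii), the event at $t_i$ alters $A^\pi$ only at a single tick, so for $p\le\max(a_{[i]},a_{[i-1]})$ the value $A^p_{[i]}$ is a deterministic function of $(\Xi_{[i-1]},e_i,x_{[i]})$ and the Dirac assertion follows; for $p>\max(a_{[i]},a_{[i-1]})$ one has $A^p_{[i]}=A^p_{[i-]}$, and (ii) already furnishes the desired law given $(\Xi_{[i-1]},\Delta t_i,e_i)$. It then remains to check that conditioning on the additional data $(x_{[i]},\Delta t_{i+1},e_{i+1})$ leaves this law unchanged: both $x_{[i]}$ and, by the Markov property combined with the analogue of (\ref{eq:condi}) applied at $t_i$, the pair $(\Delta t_{i+1},e_{i+1})$ are measurable with respect to the $\IDP$ processes on ticks $\pi\le a_{[i]}$ and are therefore conditionally independent of $A^p_{[i]}$ for $p>\max(a_{[i]},a_{[i-1]})$. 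The main subtlety throughout is that the jump time is random; the role of (\ref{eq:pu}) and (\ref{eq:condi}) is precisely to legitimise substituting the realised value of this time into the $\IDP$ distribution as if it were deterministic.
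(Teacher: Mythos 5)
Your proposal is correct and follows essentially the same route as the paper: reduction to $i=1$, the immigration--death reformulation (\ref{eq:redef}) together with the factorisation (\ref{eq:pu}) and the conditional independence (\ref{eq:condi}) to justify substituting the realised jump time into (\ref{eq:idp}), and, for (iii), the split into the part determined by $(x_{[i-1]},x_{[i]})$ and the part equal to $A^p_{[i-]}$, followed by a strong-Markov conditional-independence argument showing that the additional conditioning on $(\Delta t_{i+1},e_{i+1})$ is innocuous. The paper merely executes that last step more formally than your sketch (via Kallenberg's Proposition 6.6, the Law of Iterated Expectation and the local property on the sets $S_{j,k}$), but the underlying idea is identical.
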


\noindent
\begin{proof} Thanks to the homogeneity of the process, we may assume $i=1$.

\noindent (i)  It follows from textbook knowledge that $\Delta t_1$, being
the first jump time of Markov chain $(V^{b_0}, U^{b_0+1}, V^{a_0+1},\dots, U^{a_0})$, is exponential with rate $\gamma_0$, while $e_i$, coding a type of the chain's first jump, is (conditionally w.r.t. $\Xi_0$) independent of $\Delta t_1$ with probabilities of particular events being equal to the rates of the events' intensities to $\gamma_0$, which is formally expressed by (i).

\noindent (ii) The formula follows from (\ref{eq:pu}) and (\ref{eq:idp}).

\noindent (iii) Let $1\leq j,k \leq n$ and put $S_{j,k}=[a_{[1]}=j, a_0=k]$. Clearly,
\begin{equation}
\label{eq:r1}
x_{[1]} =\tilde x_{[1]}
\qquad \text{on $S_{j,k}$},
\end{equation}
where 
$$
\qquad \tilde x_{[1]} = (j,A^j_{[1]},\tilde b,B^{\tilde b}_{[1]}),
\qquad
\tilde b = \max\{\pi <j: B^\pi_{[1]} > 0\}.
$$
Further, it follows from the dynamics of the process 
and from the almost sure exclusivity of jumps of $\Xi$ that
\begin{multline}
\label{eq:r2}
A_{[1]} = \tilde A\quad \text{on $S_{j,k}$},
\qquad
\tilde A = (r_{j,k},A^{j\vee k+1}_{[1-]},\dots, 
A^{n}_{[1-]}),
\\
r_{j,k}=\begin{cases}
(\underbrace{0,\dots,0}_{j\times},q_{[1]}) & j \geq k \\
(\underbrace{0,\dots,0}_{j\times},q_{[1]}, 
\underbrace{0,\dots,0}_{(k-j-1)\times},q_{0}) & j < k. \\
\end{cases}
\end{multline}
Now, because $(\Delta t_1, \tilde x_{[1]})$ are functions of $U^1,V^1,\dots, U^j, V^j, \Xi_0$, we have $(\Delta t_1, \tilde x_{[1]})\in 
\sigma(U^1,V^1,\dots, U^j, V^j, \Xi_0)$ so we may use the Law of Iterated Expectation to get
\begin{multline}
\label{eq:uprobs}
\P[\tilde A \in\bullet|\tilde x_{[1]},\Delta t_1,\Xi_0]
=
\E(\P[\tilde A \in\bullet
|U^1,V^1,\dots,U^j,V^j,\Xi_0]|\tilde x_{[1]},\Delta t_1,\Xi_0)
\\
=
\E(\P[\tilde A \in\bullet
|\Xi_0]|\tilde x_{[1]},\Delta t_1,\Xi_0)
=
\P[\tilde A \in\bullet
|\Xi_0]
=p_{j,k}(\bullet;t_1,\Xi_0,r_{j,k})\quad \text{on $S_{j,k}$},
\\
p_{j,k}(s_1,\dots,s_n ;t,\Xi,r)=
\mathbf{1}[r_1=s_1,\dots,r_{j\vee k} = s_{j\vee k} ]
\prod_{\pi = j\vee k + 1}^n \P[U^\pi(\Xi)_t = s_\pi].
\end{multline}
(at the third ``='', we have used  Proposition 6.6 of \cite{Kallenberg02}, the fourth one is due to measurability of the inner term w.r.t. the condition of the expectation, the fifth one is due to (\ref{eq:pu})).

Now, by Local Property (\cite{Kallenberg02}, Lemma 6.2) applied to (\ref{eq:r1}), (\ref{eq:r2}) and (\ref{eq:uprobs}), we are getting that 
\begin{equation}
\label{eq:pau}
\P[A_{[1]} \in \bullet |x_{[1]},\Delta t_1,\Xi_0]=p_{a_{[1]},a_0}(\bullet;t_1,\Xi_0,(A^1_{[1]},A^2_{[1]}, \dots, A_{[1]}^{a_{[1]} \vee a_0}))
\end{equation}
on each $S_{j,k}$ and, as the $S_{j,k}$'s cover all the underlying probability space, the relation holds universally. 

Finally, by the strong Markov property of $\Xi$ (guaranteed, e.g., by \cite{Kallenberg02}, Theorem 12.14.), we get
$$
\P[\Delta t_2 > \tau,e_2=e|\Xi_{[1]},x_{[1]}, \Delta t_1, \Xi_{0}]
= \P[\Delta t_2 > \tau,e_2=e|\Xi_{[1]}]
$$
which is, by Proposition 6.6. of \cite{Kallenberg02}, equivalent to
$$\Delta t_2,e_2 \indep_{x_{[1]},\Delta t_1,\Xi_0} \Xi_{[1]}$$
implying, by the same Proposition with switched variables, 
$$
\P[\Xi_{[1]}\in \bullet|e_2,
\Delta t_2, x_{[1]},\Delta t_1,\Xi_0]=
\P[\Xi_{[1]}\in \bullet|
x_{[1]},\Delta t_1,\Xi_0]
$$
which, together with (\ref{eq:pau}), proves (iii). 
\end{proof}

\subsection*{Proof of the Theorem}

\subsubsection*{Ad (i)}

By the Law of Iterated Expectation, the strong Markov property of $\Xi$, and 
Proposition \ref{biprop} (i), it follows
\begin{multline}
\label{eq:markov}
\P[\Delta t_i>\tau, e_i = e|\xi_{[i-1]}]
=
\E[\P[\Delta t_i>\tau, e_i = e|\Xi_{[i-1]},\xi_{[i-1]}]|\xi_{[i-1]}]
\\
=
\E[\P[\Delta t_i>\tau, e_i = e|\Xi_{[i-1]}]|\xi_{[i-1]}]
=\E(\exp\{-\gamma_{i-1}\tau\}\pi_i(e)|\xi_{[i-1]})
=\exp\{-\gamma_{i-1}\tau\}\pi_i(e)
\end{multline}
(we could get rid of the outer conditional expectation at the last equality because its integrand is measurable with respect to its condition).

\subsubsection*{Ad (ii)}

Before deriving the distribution of $A_{[i-]}$, let us formulate an auxiliary result:

\def\F{\mathcal{F}}
\def\G{\mathcal{G}}
\begin{lemma} \label{lem:po}
Let $A,B$ be random variables, let $\F$ be as $\sigma$-algebra and let $S\in \F$. If $B|\F \sim \mathrm{Po}(\epsilon) \circ \mathrm{Bi}(\nu,\varpi)$ on $S$ and 
$A|B,\F \sim \mathrm{Po}(\zeta) \circ \mathrm{Bi}(B,\delta)$ on $S$
for some $\F$-measurable variables $\epsilon,\nu,\varpi,\zeta,\delta$ 
then
$A|\F \sim \mathrm{Po}(\zeta+\delta \epsilon) \circ \mathrm{Bi}(\nu,\delta\varpi)$ on $S$.
\end{lemma}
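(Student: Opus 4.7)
The cleanest route is via conditional probability generating functions (PGFs). For a nonnegative integer random variable $X$, write $G_X(z\mid\F) := \E[z^X\mid\F]$ for $z\in[0,1]$. Since PGFs characterize integer-valued distributions and extend measurably, two $\F$-conditional distributions on $\N_0$ coincide iff their conditional PGFs agree. Moreover, because $S\in\F$, multiplying by $\mathbf{1}_S$ commutes with $\E[\cdot\mid\F]$, so the qualifier ``on $S$'' passes transparently through every step below.

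First I would record the conditional PGFs furnished by the hypotheses. Since $\mathrm{Po}(\epsilon)\circ\mathrm{Bi}(\nu,\varpi)$ has PGF $e^{\epsilon(z-1)}(1-\varpi+\varpi z)^\nu$ and $\mathrm{Po}(\zeta)\circ\mathrm{Bi}(B,\delta)$ has PGF $e^{\zeta(w-1)}(1-\delta+\delta w)^B$, the assumptions give, on $S$,
$$
\E[z^B\mid\F] = e^{\epsilon(z-1)}(1-\varpi+\varpi z)^\nu, \qquad \E[w^A\mid B,\F] = e^{\zeta(w-1)}(1-\delta+\delta w)^B.
$$

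Next I would apply the tower property and substitute $z=1-\delta+\delta w$ into the PGF of $B$:
$$
\E[w^A\mid\F] = e^{\zeta(w-1)}\,\E\!\left[(1-\delta+\delta w)^B\bigm|\F\right] = e^{\zeta(w-1)}\cdot e^{\epsilon\delta(w-1)}\cdot\bigl(1-\varpi\delta+\varpi\delta w\bigr)^\nu,
$$
where I have used $(1-\delta+\delta w)-1 = \delta(w-1)$ and $1-\varpi+\varpi(1-\delta+\delta w) = 1-\varpi\delta+\varpi\delta w$. Collecting exponents, this equals $e^{(\zeta+\delta\epsilon)(w-1)}(1-\delta\varpi+\delta\varpi w)^\nu$, which is exactly the PGF of $\mathrm{Po}(\zeta+\delta\epsilon)\circ\mathrm{Bi}(\nu,\delta\varpi)$. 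Uniqueness of the (conditional) distribution from its PGF then yields the claim on $S$.

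The only real technical care is the measurability of the PGF identifications in $w$ and the verification that a random mixture is again Poisson$\circ$Binomial; this is immediate from the thinning interpretation (each ``success'' among the $\nu$ Bernoulli$(\varpi)$ trials contributing to $B$ is independently kept with probability $\delta$, giving Bernoulli$(\delta\varpi)$; the $\mathrm{Po}(\epsilon)$ part thinned by $\delta$ gives $\mathrm{Po}(\delta\epsilon)$, which merges with the independent $\mathrm{Po}(\zeta)$), and can be presented as an alternative combinatorial proof if preferred. I expect no substantive obstacle beyond bookkeeping, since the PGF calculation is essentially one line.
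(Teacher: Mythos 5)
Your proposal is correct, but it proves the lemma by a genuinely different route than the paper. The paper's proof is constructive: conditionally on $\F$ it builds an explicit coupling $B = B_0 + E_1 + \dots + E_\nu$ with $B_0|\F\sim\Po(\epsilon)$ and $E_i|\F\sim\mathrm{Bernoulli}(\varpi)$, writes $A = A_0 + \sum_{i=1}^{B_0}C_i + \sum_{i=1}^{\nu}D_iE_i$ with independent $\mathrm{Bernoulli}(\delta)$ thinning variables $C_i,D_i$, and then invokes a Poisson-thinning result (\cite{Daley03}, 2.3.4) plus conditional independence of the three summands to identify the law as $\Po(\zeta)\circ\Po(\delta\epsilon)\circ\Bi(\nu,\delta\varpi)$ --- i.e., it is exactly the ``alternative combinatorial proof'' you sketch in your closing remark. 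Your main argument instead composes conditional probability generating functions via the tower property, substituting $z=1-\delta+\delta w$ into $\E[z^B\mid\F]$; the algebra $e^{\zeta(w-1)}e^{\epsilon\delta(w-1)}(1-\varpi\delta+\varpi\delta w)^{\nu}$ is right and matches the asserted law. The PGF route is shorter and essentially mechanical, but it quietly relies on two standard technical facts that you correctly flag and should state: (a) the conditional PGF identity, given for deterministic arguments, extends to the $\F$-measurable argument $1-\delta+\delta w$ (e.g., by approximating $\delta$ with simple $\F$-measurable functions or by working with a regular conditional distribution), and (b) a conditional distribution on $\N_0$ is a.s. determined by its conditional PGF evaluated on a countable dense subset of $[0,1]$. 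The paper's construction avoids transforms altogether and makes the two-stage thinning mechanism explicit, at the cost of a transfer-type step (realising the assumed conditional laws by auxiliary independent variables ``without a change of the examined distributions''). Both are complete proofs; neither has a gap.
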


\begin{proof} Let $A_0,B_0,C_1, D_1,E_1,C_2, D_2,E_2 \dots$ be variables mutually conditionally independent given $\F$, such that, on $S$,  
$A_0|\F \sim \Po(\zeta)$, 
$B_0|\F\sim \mathrm{Po}(\epsilon)$, and, for each $i\in \N$,
$C_i|\F \sim \mathrm{Bernoulli}(\delta)$, 
$D_i|\F \sim \mathrm{Bernoulli}(\delta)$,
$E_i|\F \sim \mathrm{Bernoulli}(\varpi)$ (outside $S$, all the variables could be e.g., zero).
Assume, without a change of the examined distributions, that $B = B_0 + E_1 + E_2 + \dots + E_\nu$ and $A=A_0+A_1+A_2$ where
$A_1=\sum_{i=1}^{B_0} C_i$, 
$A_2=\sum_{i=1}^\nu D_i E_i$. Clearly, $A_2$ is (conditionally) Binomial with parameters $\nu$ and $\delta\varpi$. Further, by \cite{Daley03}, 2.3.4, $A_1|\F \sim \Po(\delta \epsilon)$. As $A_0, A_1, A_2$ are mutually conditionally independent given $\F$ (which is because each of them depends on independent variables) we get that $A=\Po(\zeta)\circ \Po(\delta\epsilon) \circ \Bi(\nu,\delta\varpi)$. The  Lemma now follows from the fact that a convolution of two Poisson variables is again Poisson.
\end{proof} 
\noindent
Let $1 \leq p \leq n$. First we prove, by induction, that, for any $k$,
\begin{equation}
\label{indhyp}
A^p_{[k]}|\xi_{[(k+1)-]} \sim 
\mathrm{Po}(\epsilon_{p,k}+\iota_{p,k}) \circ \mathrm{Bi}(\nu_{p,k},\varpi_{p,k}), \qquad \text{on $[a_{[k]}< p\leq n]$}.
\end{equation}

First, we show that (\ref{indhyp}) holds for $k=0$: indeed, by Proposition \ref{biprop} (i) and \cite{Kallenberg02}, Proposition 6.6., we get that  $(t_1,e_1)\indep_{x_0}\Xi_0$ (recall that $\xi_{[1-]}=x_0$), which, by the Proposition 6.6. and by (\ref{eq:zero}), gives (\ref{indhyp}) for $k=0$.
 
Now, assume (\ref{indhyp}) to hold for $k-1$. Similarly as in (\ref{eq:markov}), we get, by Proposition \ref{biprop}, (iii), that
\begin{multline}
\label{eq:paaak}
\P[
A^p_{[k]}\in \bullet|A^p_{[k-1]},\xi_{[(k+1)-]}]
\\
=
\E(\P[A^p_{[k]}\in \bullet|A^p_{[k-1]}]|
A^p_{[k-1]},\xi_{[(k+1)-]})
=
\E(\P[
A^p_{[k]}\in \bullet|\Xi_{[k-1]},e_{k},\Delta t_k]|A^p_{[k-1]},\xi_{[k-]})
\\
=
\P\left[
\mathrm{Po}
\left(\phi_{p,k-1}(1-\delta_{p,k-1})
\right)
\circ
\mathrm{Bi}
\left(
A_{[k-1]}^p,\delta_{p,k-1}
\right)\in \bullet\right]
\end{multline}
giving, by Lemma \ref{lem:po} applied to the induction hypothesis and (\ref{eq:paaak}), validity of (\ref{indhyp}) on 
$[a_{[k-1]} < p \leq n]$. 
As, on $[a_{[i-1]}=p]$, $A^p_{[k]}=q_{[k]}$ is conditionally constant, i.e.,
\begin{equation*}
\label{eq:fixed}
\P[A^p_{[k]}=\alpha|\xi_{[(k+1)-]}]
=
\mathbf{1}[\alpha = q_{[k]}] 
=\P[\Bi(A^p_{[k]},1)\circ \Po(0) = \alpha]
\end{equation*}
(\ref{indhyp}) is proved on $[a_{[i-1]}=p]$ because $k_k(p)=k$ on the set so $\epsilon_{p,k}=\iota_{p,k} = 0$, $\varpi_{p,k}=1$ and $\nu_{p,k}=A^p_{[k]}$ on the set.
As (\ref{indhyp}) on $[a_{[i-1]}>p]$ follows similarly (here, $A^p_{[k]}=0$), (\ref{indhyp}) is proved for $k$.

Having proved (\ref{indhyp}) for any $k$, we may proceed to (ii) of the Theorem: Let $1\leq p \leq n$. Similarly as above we get, using Proposition \ref{biprop} (ii),
\begin{multline*}
\label{eq:paaa}
\P[
A^p_{[i-]}\in \bullet|A^p_{[i-1]},\xi_{[i-]}]
=
\E(\P[
A^p_{[i-]}\in \bullet|\Xi_{[i-1]},\Delta t_i,e_i]|A^p_{[i-1]},\xi_{[i-]})
\\
=
\P\left[
\mathrm{Po}
\left(\phi_{p,i-1}(1-\delta_{p,i-1})
\right)
\circ
\mathrm{Bi}
\left(
A_{[i-1]}^p,\delta_{p,i-1}
\right)\in \bullet\right]
 \qquad \text{on $[a_{[i-1]} < p \leq n]$}
\end{multline*}
which implies (ii) on $[a_{[i-1]} < p \leq n]$ by Lemma \ref{lem:po}. Validity of (ii) on set $[p < a_{[i-1]}]$ on which $A^{p}_{[i-]}$ is conditionally constant, follows similarly as in the proof of (\ref{indhyp}).

\subsubsection*{Ad (iii)}
Let $i\in \N$, $1 \leq p \leq n$, denote $A^{\neg p}=(A^\pi)_{\pi \neq p}$, and assume that $A^p_{[i-1]}, A^{\neg p}_{[i-1]}$ are conditionally independent given $\xi_{[i-]}$, i.e., that
\begin{equation}
\label{eq:hypi}
\P[A^p_{[k-1]}=r,A^{\neg p}_{[k-1]}=R|
\xi_{[k-]}=\xi]
\\
=
\P[A^p_{[k-1]}=r|
\xi_{[k-]}=\xi]
\P[A^{\neg p}_{[k-1]}=R|
\xi_{[k-]}=\xi]
\end{equation}
holds for any $r,R,\xi$ and for $k = i$.

Let $I\subseteq \N_0^n$ and put 
$$
q(s,e,\tau,\Xi)
=
\P[A_{[1-]}\in I|e_1 = e,\Delta t_1 = \tau,\Xi_0 = \Xi].
$$
From (\ref{eq:pu}), we have
that
\begin{equation*}
q(I,e,\tau,\Xi_0)
=\prod_{\{1\leq \pi \leq n, I_\pi \neq \N_0 \} }
q_\pi(I_\pi,e,\tau,A^\pi_0)
\end{equation*}
for some $q_1,\dots,q_n$. So, by the Law of Iterated Expectation, strong Markov property and the homogeneity, applied gradually,
\begin{multline}
\P[A_{[i-]}\in I|\xi_{[i-]},A_{[i-1]}]=
\E(\P[A_{[i-]}\in I|\xi_{[i-]},\Xi_{[i-1]}]|\xi_{[i-]},A_{[i-1]})
\\
=
\E(\P[A_{[i-]}\in I|e_i,\Delta t_i,\Xi_{[i-1]}]|\xi_{[i-]},A_{[i-1]})
\\=
\E(q(I,e_i,\Delta t_i,\Xi_{[i-1]})|\xi_{[i-]},A_{[i-1]})
=
\prod_{\{1\leq \pi \leq n, I_\pi \neq \N_0 \} }
q_\pi(I_\pi,e_i,\Delta t_1,A^p_{[i-1]})
\label{eq:papi}
\end{multline}
(we could get rid of the expectation on the RHS because its integrand is measurable). 

Let $1\leq p \leq n$, let $J \subseteq \N_0$ and $K \subseteq \N^{n-1}_0$. By evaluating (\ref{eq:papi}) with $I=\N_0 \times \dots \times J \times \dots \times \N_0$ and with $I=K_0 \times \dots \times \N_0 \times \dots \times K_p$, we get
\begin{multline*}
\P[A^p_{[i-]}\in J,A^{\neg p}_{[i-]}\in K|\xi_{[i-]},A_{[i-1]}] 
\\=
\P[A^p_{[i-]} \in J|\xi_{[i-]}, A_{[i-1]}]
\P[A^{\neg p}_{[i-]} \in K|\xi_{[i-]}, A_{[i-1]}]
\\=
\P[A^p_{[i-]} \in J|\xi_{[i-]}, A^p_{[i-1]}]
\P[A^{\neg p}_{[i-]} \in K|\xi_{[i-]}, A^{\neg p}_{[i-1]}]
\end{multline*}
(we could write the last ``='' because the right-hand side of (\ref{eq:papi}) does not depend on no $A^\pi$ with $I_\pi=\N_0$).
Thus, by the Complete Probability Theorem,
\begin{multline}
\label{eq:complete}
\P[A^p_{[i-]}\in J,A^{\neg p}_{[i-]}\in K| \xi_{[i-]}=\xi] 
\\= 
\sum_r \sum_R 
\P[A^p_{[i-]}\in J,A^{\neg p}_{[i-]}\in K|A^p_{[i-1]}=r,A^{\neg p}_{[i-1]}=R,
\xi_{[i-]}=\xi
]\\
\qquad \qquad \qquad \qquad \qquad 
\qquad \qquad \qquad \qquad \times
\P[A^p_{[i-1]}=r,A^{\neg p}_{[i-1]}=R|\xi_{[i-]}=\xi]
\\=
\left(\sum_r 
\P[A^p_{[i-]}\in J|A^p_{[i-1]}=r,\xi_{[i-]}=\xi]
\P[A^p_{[i-1]}=r|\xi_{[i-]}=\xi]\right)
\\
\qquad \times \left(\sum_R 
\P[A^{\neg p}_{[i-]}\in K|A^{\neg p}_{[i-1]}=R,\xi_{[i-]}=\xi]
\P[A^{\neg p}_{[i-1]}=R|\xi_{[i-]}=\xi]\right)
\\
=
\P[A^p_{[i-]} \in J|\xi_{[i-]}=\xi]
\P[A^{\neg p}_{[i-]} \in K|\xi_{[i-]}=\xi]
\end{multline}
proving conditional independence of $A^p_{[i-]}$ of $A^{\neg p}_{[i-]}$ given $\xi_{[i-]}$ for any $p$, which implies mutual conditional independence of 
$A^1_{[i-]},A^2_{[i-]},\dots,A^n_{[i-]}$ given $\xi_{[i-]}$ (see \cite{Kallenberg02}, p. 109 below).

It remains to prove (\ref{eq:hypi}) for $k=i+1$. To this end, put $S_{j,k,r,s}=[a_{[i]}=j,a_{[i-1]}=k,q_{[i]}=r,q_{[i-1]}=s]$ for any $j,k$, and observe that, on each $S_{j,k,r,s}$,  $A^1_{[i]},\dots,A^{j \vee k}_{[i]}$ are conditionally constant while $A^\pi_{[i]}=A^\pi_{[i-]}$, for $\pi > j \vee k$. As $A^{j\vee k}_{[i-]},\dots,A^n_{[i-]}$ are mutually conditionally independent given $\xi_{[i]}$ by (\ref{eq:complete}) and as (conditional) constants are trivially (conditionally) independent of any random element, we get that $A^{1}_{[i]},\dots,A^n_{[i]}$ are mutually conditionally independent given $\xi_{[i]}$ on $S_{j,k,r,s}$ by Local Property (\cite{Kallenberg02}, Lemma 6.2), and, as $S$'s cover all the underlying space, the conditional independence holds universally. Relation (\ref{eq:hypi}) now follows from (i) of Proposition \ref{biprop} (the same way as in proof of (iii) of the Proposition). 

Finally, as (\ref{eq:hypi}) holds for $k=1$ by (\ref{eq:zero}),  we have just proved (iii) by induction.

\section{MLE Estimation}
\label{sec:mle}

In the present Section, asymptotic properties of the MLE estimator are proved both for the ZI and the GZI model. 

Before starting, let us formulate an auxiliary result:

\begin{lemma}\label{erglemma}
If $X$ is a continuous time stationary ergodic Markov chain in countable space $S$ then $Y=(\Delta \tau_i,X_{\tau_i})_{i\in \N}$, where $\tau_1,\tau_2,\dots$ are the jump times of $X$, is a stationary ergodic stochastic process. 
\end{lemma}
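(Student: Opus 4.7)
The plan is to exploit the Markovian structure inherited by $Y=(\Delta\tau_i,X_{\tau_i})_{i\in\N}$ and transfer the stationarity and ergodicity of $X$ to this derived sequence via the embedded jump chain.

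First I would observe that, by the strong Markov property of $X$, given $X_{\tau_i}=s$ the post-$\tau_i$ trajectory is independent of the past, so that $\Delta\tau_{i+1}\sim\mathrm{Exp}(\lambda(s))$ and $X_{\tau_{i+1}}\sim p(s,\cdot)$ independently, where $\lambda(s)$ is the total exit rate from $s$ and $p(s,s')=q(s,s')/\lambda(s)$ is the jump kernel. Consequently $Y$ is itself a time-homogeneous Markov chain in the product space $\R_+\times S$ whose one-step kernel depends only on the second coordinate. This is the key structural observation that makes the rest of the argument possible.

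Next I would identify the invariant distribution. Ergodicity of the continuous-time chain $X$ in a countable state space is equivalent to $X$ being irreducible and positive recurrent, with a unique stationary distribution $\pi$. From $\sum_{s}\pi(s)q(s,s')=\pi(s')\lambda(s')$ one checks that $\tilde\pi(s)\propto\pi(s)\lambda(s)$ is invariant for the embedded jump kernel $p$, and that the chain $Z_i=X_{\tau_i}$ is itself irreducible and positive recurrent. Hence $Y$ admits the invariant law
\[
\tilde\mu(dt,\{s\})=\tilde\pi(s)\lambda(s)e^{-\lambda(s)t}\,dt.
\]
To get strict stationarity of the sequence I would pass to the two-sided stationary extension of $X$ on $\R$ (available because $X$ is stationary and ergodic) and view the jump instants as a stationary simple point process; under the Palm distribution associated with this point process, the bilateral sequence $(\Delta\tau_i,X_{\tau_i})_{i\in\Z}$ is strictly stationary and agrees in law with $Y$ for $i\geq 1$. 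This is the classical inversion formula between a stationary process and its jump Palm measure.

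Finally, for ergodicity I would invoke the Palm-theoretic correspondence which states that the shift on the jump sequence is ergodic under the Palm measure if and only if the time-shift on $X$ is ergodic under its stationary law (a standard result in the ergodic theory of stationary point processes; see \cite{Daley03} or \cite{cornfeld1982}). Alternatively, since $Z$ is an irreducible positive recurrent discrete-time Markov chain, the Markov shift on $(\Z_i,\Delta\tau_{i+1})$ under $\tilde\mu$ is automatically ergodic, and one reads off ergodicity of $Y$.

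The main obstacle, in my view, is the matching of initial distributions and the passage from the stationarity of $X$ (given at time $0$) to the strict stationarity of the discrete-time jump sequence: $X_{\tau_1}$ under $X_0\sim\pi$ has law $\pi P\ne\tilde\pi$ in general, so one cannot simply read stationarity of $Y$ off stationarity of $X$. This forces one either to work under the Palm measure of the jump point process or to set up the two-sided stationary version of $Z$ directly and verify that the resulting marginal distribution of $X_{\tau_1}$ coincides with $\tilde\pi$; getting this identification right is the only non-routine step.
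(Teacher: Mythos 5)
Your argument is sound in outline but takes a genuinely different route from the paper's. The paper never treats $Y$ as a Markov chain on $\R^+_0\times S$: it normalises the holding times by the exit rates to obtain an i.i.d.\ sequence of unit exponentials $(U_i)$ independent of the embedded chain $(X_{\tau_i})$, argues that each component sequence is strongly mixing (the i.i.d.\ one trivially, the embedded chain by an appeal to \cite{Bradley05}), deduces strong mixing of the independent pair $Z_n=(X_{\tau_n},U_n)$ by a product-set computation, and concludes ergodicity of $Y$ as a coordinatewise function of $Z$ from the fact that strong mixing implies ergodicity. Your route --- computing the invariant law $\tilde\pi(s)\propto\pi(s)\lambda(s)$ of the jump chain and invoking ergodicity of the Markov shift of an irreducible positive recurrent chain, equivalently the Palm duality --- is longer and delivers only ergodicity rather than mixing, but it buys two things. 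First, it does not require the embedded chain to be mixing; this matters because the jump chain of an ergodic continuous-time chain can be periodic (for two states with symmetric rates it alternates deterministically, hence is ergodic but not strongly mixing), so the mixing step is actually the delicate point of the paper's argument. Second, it confronts the stationarity question, which the paper's proof passes over in silence: the obstacle you single out is real, since under the time-stationary law of $X$ one has $X_{\tau_1}\sim\pi p$, which differs from $\tilde\pi$ whenever $\lambda$ is non-constant, so the one-sided sequence $(\Delta\tau_i,X_{\tau_i})_{i\in\N}$ is strictly stationary only after passing to the Palm (jump-stationary) version. Note, however, that this Palm version does \emph{not} ``agree in law with $Y$ for $i\geq 1$'' as you write; it agrees with it only asymptotically and on the invariant $\sigma$-field, which suffices for the Birkhoff averages the paper actually needs (the Ces\`aro limit of a positive recurrent chain being insensitive to the initial distribution) but should be stated as such. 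Finally, record the integrability caveat that $\tilde\pi$ is normalisable only when $\sum_s\pi(s)\lambda(s)<\infty$, which both your sketch and the paper take for granted.
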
 
\def\indep#1{{\perp\hspace{-2mm}\perp}#1}
\begin{proof} Denote $\Lambda(x)$ the intensity of the first jump of $X$ given that $X_0=x$. From the strong Markov property (Theorem 12.14 of \cite{Kallenberg02}), from Lemma 12.16 of \cite{Kallenberg02} and from the scalability of exponential distribution we have that $U_1,U_2,\dots$, where $U_i = \Delta \tau_i \quad / \quad \Lambda(X_{\tau_{i-1}})$, $i \in \N$, is a sequence of i.i.d. (unit exponentially) distributed variables, independent of $X_0,X_{\tau_1},\dots$. 
As $U_n$ - being an i.i.d. sequence - is a strong mixing and $X_{\tau_n}$ is a strong mixing by \cite{Bradley05}, process
$Z_n:=(X_{\tau_n},U_n)$ is a strong mixing (note that, for any $A=A^1\times A^2$, $B=B^1\times B^2$, $A^1,B^1\in S^\Z$, $A^2,B^2\in \R^\Z$, it holds that 
$\P(Z \in A\cap T_n B)=\P(Z^1 \in (A^1 \cap T_n B^1))\P(Z^2 \in (A^2 \cap T_n B^2))\rightarrow \P(Z^1 \in A^1)\P(Z^1\in T_n B^1)\P(Z^2 \in A^2)\P(Z^2\in T_n B^2)=\P(Z \in A)\P(Z\in B)$ where $T_n$ is a shift operator; the case of general $A,B$ follows from their approximation by rectangles) we get the ergodicity of $Z$ by the well known fact that strong mixing implies ergodicity. Finally, as $Y_n$ is a function of $Z_n$, the Lemma is proved.
\end{proof}
\noindent It follows from the Lemma that, once $\Xi_{t}$ is understood as a process with $t\in(-\infty, \infty)$ then, thanks to its ergodicity (see Proposition \ref{ergprop}) process $\xi_{[i-]}$, $i \in \Z$, is also ergodic. Hence by
Birkhoff-Khinchin theorem (\cite{cornfeld1982}, Appendix 3), for any measurable function $f$,
\begin{equation}
\label{eq:erg}
\frac{1}{n}\sum_{i=1}^n f(\xi_{[i-]}) \rightarrow \E f(\xi_{[0-]}) 
\end{equation}
in probability given that the expression written on the RHS exists.

\def\pdim{k}
Now, let $\theta_0 \in \R^\pdim$ be a vector of true parameters. Denote
$$
B_N = \E\left[L_N(\theta_0)
\right],
\qquad
L_N(\theta)=\sum_{i=1}^N
\left(
\frac{g_i^{(m)}(\theta)
g^{(n)}_i(\theta)}{g^2_i(\theta)}
\right)_{m=1,\dots,\pdim,n=1,\dots,\pdim},
$$
($B_N$ being called Fischer information matrix) where $g^{(m)}$ stands for $\frac{\partial}{\partial \theta^m}g$. Substituting mappings assigning $\xi_{[i-]}\rightarrow g_i$ and $\xi_{[i-]}\rightarrow \nabla g_i$ for $f$ in (\ref{eq:erg}), we get, by (\ref{eq:erg}), that 
\begin{equation}
\label{eq:lne}
\frac{1}N L_N =M(\theta_0)
\end{equation}
for some non-random matrix $M$ which, being a limit of positive semidefinite matrices, is also positive semidefinite. By taking expectations on both sides of (\ref{eq:lne}), we further get
\begin{equation}
\label{eq:bne}
\frac{1}N B_N =M(\theta_0).
\end{equation}
Now put
$$
I_N(\theta) 
= 
\sum_{i=1}^N \nabla^2_{\theta} \log(g_i(\theta))
$$
and observe that 
$$
I_N(\theta) 
= 
K_N(\theta)
-
L_N(\theta),\qquad 
K_N = \sum_{i=1}^N \frac{1}{g_i(\theta)}
\nabla^2_\theta
g_i(\theta)
$$
As $\int g_i(\theta) = 1$ by the definition of density, any of the integral's first or second derivatives has to be zero, and, in particular, $\nabla^2 \int g(\theta)= \int \nabla^2g(\theta)=0$ (we may interchange the integral and derivative in our discrete case) so
$$
\E K_N(\theta_0) = \int \frac{1}{g_i(\theta_0)}
\nabla^2 g_i(\theta_0)
 g_i(\theta_0) = 
\int \nabla^2 g_i(\theta_0) =  0
$$
implying
\begin{equation}
\label{eq:inc}
-\frac{1}{N} I_N(\theta_0) \rightarrow M(\theta_0).
\end{equation}
If we now, as usual, assume $M(\theta_0)$ to be regular, then, by (\ref{eq:bne}) and (\ref{eq:inc}), 
\begin{equation}
-N^{-1/2} B_N^{-1/2} I_N(\theta_0)=
[(N^{-1}B_N)^{-1}(-N^{-1}I_N(\theta_0))]^{1/2}[N^{-1}(-I_N(\theta_0))]^{1/2}\rightarrow 
M(\theta_0)^{-1/2}.
\label{eq:bnest}
\end{equation}
Let $\|\theta-\theta_0\|=\delta$ for some $\delta$ and, for any matrix $J$, denote 
$F_N(J) = -(\theta-\theta_0)^T[N^{-1/2} B_N^{-1/2}J](\theta-\theta_0)$. By (\ref{eq:bnest}) and basic linear algebra, 
\begin{equation}
\label{eq:bnlim}
\lim_N F_N(I_N(\theta_0))
=(\theta-\theta_0)^T M(\theta_0)^{-1/2}(\theta-\theta_0)
\geq 
\delta^2 \lambda_{\min}
\end{equation}
where $\lambda_{\min}$ is the smallest eigenvalue of $M(\theta_0)^{1/2}$.
If, in addition, the parameter space is open and both $\kappa$ and $\rho$ are twice continuously differentiable with respect to all the parameters (which is true for all the versions of the model we use), then $\frac{1}{N} I_N(\theta)$ is asymptotically Lipschitz on 
$\{\theta:\|\theta - \theta_0\|\leq \delta\}$ so
\begin{multline}
\lim_N \left| F_N(I_N(\theta)-I_N(\theta_0)) \right|
\\
\leq 
\lim_N \| \theta-\theta_0 \| 
\| N^{-1/2}   
B_N^{-1 /2} \| \|I_N(\theta)-I_N(\theta_0)\|
\|\theta-\theta_0\|
 \leq K_\delta \| M(\theta_0)^{-1/2}\|\delta^3
\label{eq:lip}
\end{multline}
where $\|\bullet \|$ is a suitable norm and $K_\delta$ is a Lipschitz constant, further implying,
by using the fact that $|a+b|>|a|-|b|$ together with (\ref{eq:bnlim}) and (\ref{eq:bnlim}), we get
$$
\lim_N F_N(I_N(\theta))\geq \lim_N F_N(I_N(\theta_0)) - \lim_N | F_N(I_N(\theta)-I_N(\theta_0))|
\geq 
\delta^2 g(\delta).
$$
where $g(\delta) = (\lambda_{\min} - K_\delta \| M(\theta_0)^{-1/2}\| \delta)$.
From the differentiability it follows that $K_\delta \rightarrow K$ as $\delta\rightarrow 0$ for some $K$ so there has to exist $\Delta$ and $g_0>0$ such that $g(\delta) \geq g_0$ for all $\delta \leq \Delta$. Weak consistency now follows from (2.3) of \cite{crowder1976} with $c_n=n^\beta$ for suitable $\beta<1$.

For the asymptotic normality it suffices, by \cite{crowder1976}, (4.13) and the considerations explained below, that the absolute $k$-th moments of the observations are bounded for a certain $k>2$. This, however, can be easily achieved by determining a large enough constant $L$ and excluding from the sample any observation $(x_{[i]},\tau_{[i]})$ with $\E(\|(x_{[i]},\tau_{[i]})\|^k|\xi_{[i-]})>L$.

\section{Detailed Results}
\label{sec:detail}

\subsection*{Notation}
$N$ - sample size,  $\overline \Delta a^+$ - average jump of $a$ up, $\bar q$ - average market order volume. The number of stars stands for signification on levels $0.05$, $0.01$ and $0.001$, respectively. 

\subsection{ZI Model}
\label{zires}

\tiny
\\

\end{document}